\numberwithin{equation}{section}	
\title{Microscopic derivation of the Fr\"ohlich Hamiltonian for the Bose polaron in the mean-field limit}
\author{Krzysztof My\'{s}liwy\footnote{krzysztof.mysliwy@ist.ac.at}\ \ and Robert Seiringer\footnote{robert.seiringer@ist.ac.at}} 
\affil{\textit{IST Austria, Am Campus 1, 3400 Klosterneuburg, Austria}}
\newtheorem{thm}{Theorem}
\newtheorem{proposition}[thm]{Proposition}
\newtheorem{lemma}[thm]{Lemma}
\newtheorem{condition}{Assumption}
\theoremstyle{definition}
\newtheorem{rem}{Remark}[thm]
\begin{document}

\maketitle

\begin{abstract}
We consider the quantum mechanical many-body problem of a single impurity particle immersed in a weakly interacting Bose gas. The impurity  interacts with the bosons via a two-body potential. We study the Hamiltonian of this system in the mean-field limit and rigorously show that, at low energies, the problem is well described by the Fr\"ohlich polaron model.
\end{abstract}

\section{Introduction and main results}

\subsection{The polaron}
The behavior of impurity particles interacting with a large background constitutes an important class of problems within condensed matter physics \cite{Dev, Weiss}. Among these, one of the most prominent is the polaron problem, where one considers a quantum particle of mass $M$ linearly coupled to a scalar boson field. 
For a translation invariant system, this corresponds to the formal Hamiltonian \begin{equation}\label{Fro1}
  H=\frac{P^2}{2M}+\sum_k e_k a^{\dagger}_k a_k +\sum_k \left(g_k a_k e^{ikR}+g^*_k a^{\dagger}_{k}e^{-ikR}\right),
\end{equation} 
where $R$ denotes the position of the impurity particle, and $k$ labels the momentum modes of the field. Moreover, $P=-\mathrm{i}\nabla_R$ is the particle's momentum operator in the canonical representation, and $a^{\dagger}_k,a_k$ are the usual field mode creation and annihilation operators. They satisfy the canonical commutation relations $[a_k,a^{\dagger}_{k'}]=\delta_{k,k'}, [a_k, a_{k'}]=0$. The $g_k$ are coefficients quantifying the coupling of the particle  to the field, with $^*$ denoting the complex conjugate, and $e_k$ is the free field dispersion relation. The natural domain of this Hamiltonian lies in the Hilbert space $\mathcal{H}\otimes \mathcal{F}(\mathcal{K})$, where $\mathcal{H}$ is the Hilbert space of the particle and $\mathcal{K}$ is the Hilbert space of a single field mode, with $\mathcal{F}(\mathcal{K})$ denoting the symmetric Fock space  over $\mathcal{K}$. $\mathcal{K}$ and $\mathcal{H}$ are appropriate $L^2$ spaces, whose exact specification depends on the underlying physical situation; our choice thereof is discussed below. 

The Hamiltonian \eqref{Fro1} is commonly referred to as the Fr\"ohlich Hamiltonian, as it was introduced by Fr\"ohlich in 1937 \cite{Fro} in order to describe electronic motion in polar crystals. 
The \emph{polaron} in this context refers to the picture of an electron dressed with the emerging optical phonons dragged along as it moves. 
 Later, this concept was extended to include other phenomena related to mobile impurities coupled to excitations of the background, giving rise to interesting effects in many materials \cite{Dev,RevPolCrys,Misza} which are still the subject of ongoing research \cite{Nagoasa, Koepsell}. 

In this work, we are interested in a rigorous justification of the use of Hamiltonians of the type \eqref{Fro1} as an effective description of a full quantum mechanical many-body problem. In the case of the original Fr\"ohlich  
model this task seems too ambitious due to a complicated microscopic structure of the background (see, however, \cite{LewinRougerie}, where the classical approximation to the original  
polaron problem, the Pekar functional, is rigorously derived from a specific model of an electron moving through a quantum crystal). The applicability of the polaron picture is not limited to electrons in crystal lattices, however. In fact, recent progress in experiments with ultracold atoms opened the possibility of studying impurity atoms immersed in an environment consisting of many bosonic atoms 
at low temperatures, displaying  Bose--Einstein condensation. As discussed below, at sufficiently low energies the excitations of the bosonic bath correspond to quantized acoustic phonons, and hence the \emph{Bose polaron} corresponds to the impurity atom dressed with these phonons. We refer to \cite{Grusdt} for a review of recent theoretical progress concerning the application of Fr\"ohlich Hamiltonians to these systems. As the mathematical description of cold Bose systems, and  in particular the structure of their excitation spectra at low energies, have  recently been studied rigorously in numerous works \cite{LSSY, LiebYngvason, LNSS, Se11, Marcin, GreSe}, we find it natural to provide a rigorous microscopic derivation of \eqref{Fro1} based on these  results.

\subsection{The $N+1$ Bose gas}
We consider a system of $N$ bosons of mass 1/2 and one additional particle (of an unspecified type of statistics) of mass $M$, all confined to move on the unit torus in $d$ dimensions, $\mathbb{T}^d$. 
\begin{condition}[Assumptions on the potentials]\label{assum}  We assume that 
\begin{enumerate}
\item the bosons interact among themselves via a two-body potential $v: \mathbb{T}^d\rightarrow \mathbb{R}$ which is bounded, Borel measurable, even and of positive type, i.e., all its Fourier coefficients  $v_p$ are non-negative. 
\item the additional impurity particle interacts with the bosons via a real-valued two-body potential $w: \mathbb{T}^d\rightarrow \mathbb{R}$, which is bounded, Borel measurable and even. 
\end{enumerate}
\end{condition}
Note that no assumption is made on the Fourier coefficients $w_p$ of $w$. Nevertheless $w$ being even implies $w_p=w_{-p}\in \mathbb{R}$. Without loss of generality, we may in addition assume that  $v$ and $w$ are non-negative, since they can be shifted by a constant otherwise.

The positions of the bosons are labeled by $\lbrace x_i \rbrace_{i=1}^N, x_i\in \mathbb{T}^d$ and the position of the impurity by $R\in \mathbb{T}^d$. The Hamiltonian of this system reads 
\begin{equation}\label{ham1}
\frac{-\triangle_R}{2M}-\sum_{i=1}^N \triangle_{x_i}+\lambda\sum_{1\leq i<j\leq N}v(\eta (x_i-x_j))+\mu \sum_{i=1}^N w(\nu(x_i-R))
\end{equation} 
where we introduced some coupling ($\lambda$,$\mu$) and scaling ($\eta$,$\nu$) parameters to be chosen. 
It acts on $L^2(\mathbb{T}^d)\otimes \mathcal{H}_N$ with $\mathcal{H}_N$ being the Hilbert space of square-integrable symmetric functions on $\mathbb{T}^{dN}$. Here, $\triangle_y$ denotes the $d-$dimensional Laplacian in the coordinate $y$ acting on functions on the unit torus. The coupling parameters $\lambda$ and $\mu$ determine the  strength of the potentials $v$ and $w$ (for the functional forms of $v$ and $w$ being fixed), whereas $\eta$ and $\nu$ determine the respective ranges (relative to the system size). They can be adjusted to consider various scaling regimes.  The usual thermodynamic limit corresponds to the choice $\eta \sim \nu \sim N^{1/d}$ and $\lambda\sim \mu \sim N^{2/d}$. 
In contrast, we consider here the mean-field limit, where the interactions are weak and extend over the entire system. In particular, we choose  $\lambda=(N-1)^{-1}$, $\mu=N^{-1/2}$, and $\eta=\nu=1$.  For systems without  impurity, this was the  scaling for which the first rigorous results on the excitation spectrum were obtained \cite{Se11,LNSS,GreSe,NamSe}, and our analysis is based on them. The choice $\mu = N^{-1/2}$ for the impurity-boson coupling turns out to be a natural in the analysis, compatible with the methods from \cite{Se11, LNSS} we use, as explained below (see, in particular, Remark~\ref{rem1}). Therefore, from now on we consider the Hamiltonian 
\begin{equation}\label{ham}
  H_N:=\frac{-\triangle_R}{2M}-\sum_{i=1}^N \triangle_{x_i}+\frac{1}{N-1}\sum_{1\leq i<j\leq N}v(x_i-x_j)+
\frac{1}{\sqrt{N}} \sum_{i=1}^N w(x_i-R)
\end{equation}
on $L^2(\mathbb{T}^d)\otimes \mathcal{H}_N$, with $v$ and $w$ non-negative $1$-periodic functions satisfying Assumption~\ref{assum}.

\subsubsection{Motivation of the Fr\"ohlich Hamiltonian}

With $v_p$ and $w_p$ denoting the Fourier coefficients of $v$ and $w$, respectively, 
the second-quantized version of $H_N$ in \eqref{ham} reads 
\begin{equation}\label{sec}
\frac{-\triangle_R}{2M}+E_{\rm H}(N)+\sum_{p\neq 0} p^2 a_p^{\dagger}a_p + \frac{1}{2(N-1)}  \sum_{\substack{p,q,k\in (2\pi \mathbb{Z})^d \\ p\neq 0}} a^{\dagger}_{p+k}a^{\dagger}_{q-p}a_q a_k +\frac{1}{\sqrt{N}} \sum_{\substack{p, k \in(2\pi \mathbb{Z})^d \\ p \neq 0}} w_{p} e^{-ipR} a^{\dagger}_{p+k}a_{k}.
\end{equation} 
We defined the \emph{Hartree ground state energy} \begin{equation}\label{EH}
E_{\rm H}(N)= \frac{N}{2}v_0+\sqrt{N}w_0,
\end{equation} which captures the effect of interactions between particles in the $p=0$ mode. The sums run over $(2\pi\mathbb{Z})^d$ with $p= 0$ excluded. 
Here, $a_p$ denotes the usual annihilation operator $\mathcal{H}_N\rightarrow  \mathcal{H}_{N-1}$ acting as 
\begin{equation}\label{animp}
(a_p \Psi)(x_1,x_2,\cdots,x_{N-1})=\sqrt{N}\int_{\mathbb{T}^{d}}\Psi(x_1,\cdots, x_{N-1},x)e^{-ipx}dx.
\end{equation} 
The second-quantized Hamiltonian \eqref{sec} acts on $L^2(\mathbb{T}^d)\otimes \mathcal{F}$, with $\mathcal{F}$ the bosonic Fock space $\mathcal{F}$  over $L^2(\mathbb{T}^d)$, i.e.,  $\mathcal{F}:=\bigoplus_{i=0}^{\infty}\mathcal{H}_i$ (with $\mathcal{H}_0=\mathbb{C}$). Actually, it preserves $L^2(\mathbb{T}^d) \otimes \mathcal{H}_N$.  
For the system without  impurity, it was predicted by Bogoliubov \cite{Bog47} that for sufficiently low energies, the excitation spectrum of $H_N$ should be composed of elementary excitations, which are physically interpreted as quantized (acoustic) free phonons.  This serves as the basis for the microscopic explanation of the onset of superfluid behavior in low-temperature bosonic systems. From the formal perspective, it provides a specific example of the emergence of an effective quantum field theoretical description of a many-body system. The low-energy effective theory  is predicted to be that of the Hamiltonian 
\begin{equation}\label{bog}
\mathbb{{H}}^{\text{B}}=\sum_{p\neq 0} e_p b^{\dagger}_p b_p\,.
\end{equation} 
Here, $b^{\dagger}_p=\alpha_p a^{\dagger}_p +\beta_p a_{-p}$ where $\alpha_p, \beta_p$ are appropriate constants chosen such that $[b_p,b^{\dagger}_q]=\delta_{p,q}$. Explicitly, $\alpha_p=(1-\gamma_p)^{-1/2}$ with $\gamma_p=1+\frac{p^2-e_p}{v_p}$ and $\beta_p=\gamma_p\alpha_p$. These algebraic relations are realized via a suitable unitary (Bogoliubov) transformation. From \eqref{bog} we deduce that, for low energies, the excitation spectrum is expected to be composed of free bosonic quasi-particles with dispersion relation $e_p$. In the mean-field scaling $\lambda=(N-1)^{-1}$ considered here, one can prove \cite{Se11} that $e_p=\sqrt{p^4+2v_p p^2}$. 
Additionally, it can be shown that in this scaling the ground state energy equals $\frac{1}{2}Nv_0+E^{\rm B}+o(1)$ with the constant $E^{\rm B}$ equal to 
\begin{equation}
E^{\rm B}=-\frac{1}{2}\sum_{p\neq 0} \left( p^2+v_p-\sqrt{p^4+2p^2v_p}\right). 
\end{equation} 

The method employed by Bogoliubov  leading to $\mathbb{{H}}^\text{B}$ consists of the following steps:
\begin{enumerate}
  \item the operators $a_0, a_0^{\dagger}$ are replaced by the number $\sqrt{N}$
  \item all the terms of higher order than quadratic in  creation and annihilation operators that remain in the Hamiltonian are dropped. 
  \end{enumerate}
This procedure is physically motivated by the expectation that for sufficiently small energies there is Bose--Einstein condensation in the system, that is,  the $p=0$ mode is occupied by an overwhelming fraction of particles. Whereas this has not been proven for a generic bosonic system with general interactions, the validity of the Bogoliubov approximation has been rigorously verified (in the case $w\equiv 0$) for a variety of assumptions on $v$ \cite{Se11, Marcin, LNSS, GreSe,BoccatoBrenneckeSchleinCenatiempo}. The first such result \cite{Se11} refers precisely to our conditions on $v$ and, as already mentioned, the mean-field scaling $\lambda=(N-1)^{-1}$, which corresponds to a very weak and long-ranged potential. 

If one applies the Bogoliubov approximation to the Hamiltonian \eqref{sec} with impurity, one expects that the system is, for small energies, effectively described by  the Fr\"ohlich Hamiltonian 
\begin{equation}\label{Fro3}
\mathbb{H}^{\text{F}}:=\frac{-\triangle_R}{2M}+\sum_{p\neq 0} (p^2+v_p) a^{\dagger}_p a_p +\frac{1}{2}\sum_{p\neq 0} v_p (a^{\dagger}_p a^{\dagger}_{-p}+a_pa_{-p})+\sum_{p\neq 0} w_p e^{-ipR}(a^{\dagger}_p+a_{-p}).
\end{equation} By expressing the $a_p$'s in terms of the operators $b_p, b^{\dagger}_{-p}$, we see that it equals 
\begin{equation}\label{Fro2}
\mathbb{H}^{\text{F}}=\frac{-\triangle_R}{2M}+\sum_{p\neq 0}e_p b^{\dagger}_p b_p+\sum_{p\neq 0}\frac{|p|w_p}{\sqrt{e_p}}e^{-ipR}(b^{\dagger}_p+b_{-p})+E^{\rm B}
\end{equation} 
 which belongs to the class of Hamiltonians defined in \eqref{Fro1}. The Hamiltonian $\mathbb{H}^{\text{F}}$ acts on $L^2(\mathbb{T}^d)\otimes \mathcal{F}_+$, where $\mathcal{F}_+$ is the Fock space over the complement of the normalized constant function in $L^2(\mathbb{T}^d)$, describing solely the $p\neq 0$ modes of the field. 
In order to  obtain \eqref{Fro2} via a Bogoliubov approximation, we supplemented this procedure by additionally dropping, in the impurity-boson interaction, all the terms that are of higher order than \emph{linear} in the creation and annihilation operators (after first replacing the $a_0$ and its adjoint by $\sqrt{N}$), whereas we kept the quadratic terms in the boson-boson interaction. One of elements of our analysis below is the justification of this additional step while checking that the other steps, known to be rigorously justifiable in the mean-field case in the absence of an impurity, are still applicable. It is important, however, to realize that in some instances, especially when the impurity-boson interaction is strong, additional terms not present in the Fr\"ohlich Hamiltonian \eqref{Fro2} cannot be neglected \cite{Timour,Salmhofer,Enss}.

\subsection{Main results} 
The interpretation of our main results, as stated below, is that the Fr\"ohlich Hamiltonian \eqref{Fro2} may indeed be seen as an effective low-energy, large $N$ theory for the original model described by $H_N$ in \eqref{ham}.  Our analysis  consists of a rigorous justification of the extended Bogoliubov approximation, based on suitable operator inequalities. It leads to two main theorems, the first of which concerns the excitation spectrum of $H_N$.

\subsubsection{Theorem 1: convergence of eigenvalues} Let us denote by $e_i(A)$ the $i-$th eigenvalue resp. the $i-$th min-max value of an operator $A$, starting at $i=0$.  Our first Theorem states  that as long as one considers the energy levels of $H_N$ lying in a not too large window above the ground state, their values are provided by the corresponding eigenvalues of the Fr\"ohlich Hamiltonian if $N$ is sufficiently large. In particular, we provide explicit bounds on the size of that window as compared with $N$.

\begin{thm}\label{thm1}
Let $H_N$ and $\mathbb{H}^{\rm{F}}$ be defined by Eqs. \eqref{ham} and \eqref{Fro2}, respectively, and let $E_{\rm H}(N):=\frac{N}{2}v_0+\sqrt{N}w_0$. Assume that $v$ and $w$ satisfy Assumption~\ref{assum}. 
Then for all eigenvalues $e_i(H_N)$ such that  $e_i(H_N)-e_0(H_N)\leq \xi$ for some $\xi\geq 1$ we have  
\begin{equation}
 |e_i(H_N) -E_{\rm H}(N)-e_i(\mathbb{H}^{\text{F}})|\leq C_{v,w}\xi \left(\frac{\xi}{N}\right)^{1/2}
\end{equation} 
for some constant $C_{v,w}>0$ independent of the parameters $\xi$ and $N$. 
\end{thm}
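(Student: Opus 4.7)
The plan is to establish two-sided operator inequalities of the form $\pm(H_N - E_{\rm H}(N) - U \mathbb{H}^{\rm F} U^*) \leq \mathcal{E}$ on suitable spectral subspaces, and then invoke the min-max principle to transfer them to eigenvalues. Here $U$ is a unitary implementing the c-number substitution $a_0 \to \sqrt{N_0}$, chosen so that $U^* H_N U$ acts on a Fock space over the orthogonal complement of the condensate mode. The error operator $\mathcal{E}$ will be controlled in terms of the excitation number operator $\mathcal{N}_+ = \sum_{p\neq 0} a_p^\dagger a_p$ and the kinetic energy $\sum_{p\neq 0} p^2 a_p^\dagger a_p$, both of which can be bounded a priori on low-energy states.

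First I would apply a unitary (as in Lewin--Nam--Serfaty--Solovej or as in Seiringer's 2011 paper) that implements the exact substitution, effectively replacing $a_0$ and $a_0^\dagger$ by $\sqrt{N - \mathcal{N}_+}$. Expanding \eqref{sec} after this substitution and expanding the square roots yields the Fröhlich Hamiltonian \eqref{Fro3} as the leading piece, plus the following error terms: (i) cubic and quartic terms in the boson--boson interaction, namely the contributions with three or four modes $\neq 0$; (ii) the "non-linear" impurity--boson term $\frac{1}{\sqrt{N}} \sum_{p,k\neq 0,\,p+k\neq 0} w_p e^{-ipR} a_{p+k}^\dagger a_k$; and (iii) corrections coming from $\sqrt{N-\mathcal{N}_+} - \sqrt{N}$ multiplying the quadratic Bogoliubov and the linear impurity terms.

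The core estimates are then: the cubic boson terms are bounded, via Cauchy--Schwarz and the positive-type assumption on $v$, by $C_v N^{-1/2} (\mathcal{N}_+ + 1)^{3/2}$, the quartic ones by $C_v N^{-1} (\mathcal{N}_+ + 1)^2$, and the extra impurity term by $C_w N^{-1/2}(\mathcal{N}_+ + 1)$ using $\sum_p |w_p|^2 < \infty$ (which follows from boundedness of $w$). The $\sqrt{N-\mathcal{N}_+}-\sqrt{N}$ corrections to the kept Fröhlich terms are similarly bounded by $C N^{-1/2}(\mathcal{N}_+ + 1)^{3/2}$ plus $C N^{-1/2}(\mathcal{N}_+ + 1)$ for the impurity contribution. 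Combining these, the total error obeys
\begin{equation}
\pm \mathcal{E} \leq \frac{C_{v,w}}{\sqrt{N}} (\mathcal{N}_+ + 1)^{3/2}.
\end{equation}

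The key a priori ingredient is then the bound $\langle (\mathcal{N}_+ + 1)^k \rangle_\psi \leq C_k (1 + \xi)^k$ on any normalized $\psi$ supported in the spectral subspace $\{H_N \leq e_0(H_N) + \xi\}$, which follows from the Bogoliubov diagonalization since $\mathbb{H}^{\rm F}$ dominates $c\, \mathcal{N}_+$ modulo bounded terms (the impurity coupling is a form-bounded perturbation of the Bogoliubov Hamiltonian with relative bound zero, absorbable into $\frac{-\triangle_R}{2M}$ and $\sum e_p b_p^\dagger b_p$). Feeding this into $\mathcal{E}$ yields $|\langle H_N - E_{\rm H}(N) - \mathbb{H}^{\rm F}\rangle_\psi| \leq C_{v,w} \xi^{3/2}/\sqrt{N}$, and applying the min-max principle separately for the upper and lower bounds gives the claimed estimate on $|e_i(H_N) - E_{\rm H}(N) - e_i(\mathbb{H}^{\rm F})|$.

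The main obstacle I expect is the handling of the "non-linear" impurity term in (ii) together with the $\sqrt{N-\mathcal{N}_+}$ correction to the linear impurity term: unlike the boson--boson errors, these involve the impurity coordinate $R$ through the phase $e^{-ipR}$, and a naive Cauchy--Schwarz loses too much because $\sum_p |w_p|$ need not be finite. Controlling them requires exploiting the $L^2$-summability $\sum_p |w_p|^2 < \infty$ together with the commutator structure that lets one pair each $w_p e^{-ipR}$ with one of the $a^\sharp$ operators, producing a bound by $\|w\|_2 (\mathcal{N}_+ +1)$ rather than by $\|w\|_1 \sqrt{\mathcal{N}_+}$. Establishing this estimate cleanly, and then verifying that the same bookkeeping survives the unitary $U$ implementing the c-number substitution, is the technical heart of the proof.
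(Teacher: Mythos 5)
Your overall strategy --- implementing the c-number substitution with an LNSS-type unitary, bounding the difference $U H_N U^\dagger - E_{\rm H}(N) - \mathbb{H}^{\rm F}$ by powers of $N_+$, controlling those powers a priori on low-energy states, and concluding by min-max --- is the paper's, and your inventory of the error terms and their sizes (cubic, quartic, the extra impurity term, the $\sqrt{N-N_+}-\sqrt{N}$ corrections, total $\lesssim N^{-1/2}(N_++1)^{3/2}$, hence $\xi^{3/2}/\sqrt{N}$) is essentially correct. However, your key a priori input is obtained circularly. For the half of the min-max argument that uses eigenvectors of $H_N$ as trial states, you need $\langle N_+^2\rangle_\psi \leq C\xi^2+C$ (or at least a $3/2$-moment bound) for $\psi$ in the spectral subspace $\{H_N \leq e_0(H_N)+\xi\}$, and this must be proved from $H_N$ itself; you cannot invoke the fact that $\mathbb{H}^{\rm F}$ dominates $cN_+$ modulo bounded terms, because the relation between $H_N$ and $\mathbb{H}^{\rm F}$ is precisely what is being established. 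The paper proves $N_+\leq C(H_N-e_0(H_N))+C$ using positivity of the Fourier coefficients $v_p$ and Temple's inequality for the impurity term, and then, with a genuinely separate argument exploiting permutation symmetry and the splitting of $v(x_1-x_2)$ by the projections $\mathcal{P},\mathcal{Q}$, the second-moment bound $N_+^2\leq C(H_N-e_0(H_N))^2+C$; the latter is not a formal consequence of the former, and your proposal contains no substitute for it.

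The second gap is that ``applying the min-max principle separately for the upper and lower bounds'' conceals the main structural difficulty of the upper bound: $U$ maps $\mathcal{H}_N$ onto the \emph{truncated} space $\mathcal{F}_+^{\leq N}$, whereas the eigenvectors of $\mathbb{H}^{\rm F}$ live in the full $\mathcal{F}_+$ and have components with arbitrarily many excitations, so they cannot be pulled back by $U^\dagger$ to serve as trial states for $H_N$. One must first localize them to at most $N$ particles via an IMS-type Fock-space localization ($f_M,g_M$), estimate the localization error --- which requires the bound $N_+^2\leq C(\mathbb{H}^{\rm F})^2+C$, itself nontrivial since $N_+$ fails to commute with the off-diagonal and impurity parts of $\mathbb{H}^{\rm F}$ --- and verify that localization does not decrease the dimension of the trial subspace. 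None of this appears in your outline. By contrast, the obstacle you single out as the technical heart (the quadratic impurity term and the phases $e^{-ipR}$) is handled much more simply in the paper: after shifting $w$ to be non-negative one writes, as a one-body operator, $0\leq w\leq \mathcal{P}w\mathcal{P}+\mathcal{P}w\mathcal{Q}+\mathcal{Q}w\mathcal{P}+\|w\|_\infty\mathcal{Q}$, so that dropping the $\mathcal{Q}w\mathcal{Q}$ part costs only $\|w\|_\infty N_+/\sqrt{N}$ and no summability of $w_p$ beyond boundedness of $w$ is ever needed.
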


\begin{rem}\label{rem1}
In the special case of the ground state energy we have 
\begin{equation}
\inf \mathrm{spec}\, H_N = \frac{1}{2}Nv_0+\sqrt{N}w_0+\inf \mathrm{spec}\, \mathbb{H}^{\rm{F}}+O(N^{-1/2}) . 
\end{equation}
The interaction with the impurity thus gives rise to a $N^{1/2}$ contribution to the ground state energy and, more importantly, leads to an $O(1)$ contribution to the excitation spectrum via the last term in \eqref{Fro3}. This can be understood as follows. In the impurity-free case, the effect of the emergence of phonons is reflected as a $O(1)$ correction to the ground state and low-lying excitation energies, in the mean-field limit considered here. There are only finitely many (even for large $N$) phonons that emerge in the system. The Fr\"ohlich model describes the impurity creating and annihilating excitations of the background. The number of the latter being $O(1)$, we expect that this phonon-impurity interaction should as well give rise to an $O(1)$ correction. The Bogoliubov approximation suggests that this interaction should scale as $\mu N^{1/2}$, hence we see that $\mu\sim 1/\sqrt{N}$ is consistent with these considerations. 
\end{rem}

\begin{rem}
\rm{The error bounds are of the form $\xi  (\xi/N)^{1/2}$. Therefore, as long as the total excitation energy satisfies $\xi\ll N$, the error made by using the Fr\"ohlich Hamiltonian instead of the original one when computing the energy levels is small compared to the total excitation energy. The size of this energy window is presumably optimal.  In fact, if the condition $\xi\ll N$  is not fulfilled one cannot expect the onset of BEC anymore, which is an essential assumption in the Bogoliubov approximation. 
It is noteworthy that precisely the same error scaling was obtained in \cite{Se11} for the pure bosonic system. The effects of the inclusion of the impurity thus manifest themselves only in the value of the constant $C_{v,w}$. }
\end{rem}

\begin{rem} By a direct inspection of the proof, one sees that the result can easily be generalized to the case of multiple impurities (as long as their number is fixed, i.e., independent of $N$). This holds irrespectively of the statistics of the impurities, i.e. they could be fermions, bosons, or distinguishable (in particular, different) particles.
\end{rem}

\begin{rem}
  

Extending the results to the case of more realistic, short-ranged potentials remains a challenge.  In fact, the $w\equiv 0$ cases with either $\lambda=N^{2/d}$, $\eta=N^{1/d}$ (equivalent to the thermodynamic limit) or $\lambda=N^2$, $\eta=N $ in $d=3$ (the \emph{Gross-Pitaevskii} limit) were  rigorously analyzed only very recently. The results for the thermodynamic limit concern the ground state energy only \cite{LiebYngvason, ErdoesSchleinYau, FournaisSolovej, YanYin}, whereas in the Gross-Pitaevskii scaling regime the emergence of the Bogoliubov spectrum for low energies was shown as well \cite{BoccatoBrenneckeSchleinCenatiempo}. 
\end{rem}

\begin{rem} If a contact interaction is used to model both boson-boson and boson-impurity interaction, one encounters the \emph{Bogoliubov--Fr\"ohlich Hamiltonian} \cite{Grusdt, Lampart}
\begin{equation}\label{HF}
\mathbb{H}^{\text{B-F}}=\frac{P^2}{2M}+\sum_p \epsilon_p b^{\dagger}_p b_p +\sqrt{n_0} g_{IB}\sum_p \left(\frac{(\zeta p)^2}{2+(\zeta p)^2}\right)^{1/4}(b^{\dagger}_p+b_{-p})e^{-ipR}, 
\end{equation} where $n_0$ is the condensate density and $\zeta=(2g_{BB}n_0)^{-1/2}$ is the {healing length}; the parameters $g_{IB}$ and $g_{BB}$ are the coupling constants describing the impurity-boson and boson-boson interactions, respectively. Additionally, $\epsilon_p=\sqrt{c^2p^2(1+(\zeta p)^2)}$ with $c=1/\zeta=\sqrt{2g_{BB}n_0}$ denoting the speed of sound in the bosonic bath. This Hamiltonian displays an evident ultraviolet divergence, recently analyzed in \cite{Lampart}. By naively replacing $v_p$ and $w_p$ in \eqref{Fro2} with the respective coupling constants $g_{BB}$ and $g_{IB}$, one arrives at $\mathbb{H}^{\text{B-F}}$ with unit condensate density. 
We conjecture that \eqref{HF}, resp. some renormalized version of it,  arises in place of $\mathbb{H}^\text{F}$ in scaling regimes corresponding to more realistic interactions of shorter range than the mean-field limit considered here.
\end{rem}

\begin{rem}
Our proof makes use of methods from \cite{Se11} and \cite{LNSS}.  In particular, in the case $w\equiv 0$, we reproduce the results of \cite{Se11}, but by utilizing techniques from \cite{LNSS} we are able to substantially simplify the proof. 
\end{rem}

\subsubsection{Theorem 2: convergence of eigenvectors}

In order to compare the two operators $H_N$ and  $\mathbb{H}^\text{F}$, which act on  different Hilbert spaces, 
we utilize 
an operator introduced by Lewin, Nam, Serfaty and Solovej in \cite{LNSS}, which maps $\mathcal{H}_N$ to (a subspace of) $\mathcal{F}_+$. We give here a quick review of their construction, as it is important to formulate our second result. 

\subsubsection{The LNSS transform}\label{ss:LNSS}

If $\lbrace v_i \rbrace_{i\geq 0}$ is an orthonormal basis of some Hilbert space $\mathcal{H}$, then the $N$-fold symmetric tensor product of $\mathcal{H}$ is spanned by $N$-fold tensor products
\begin{displaymath}v_{i_1}\otimes_{\textrm{s}} \cdots \otimes_{\textrm{s}} v_{i_N}:=\mathcal{N}\sum_{\sigma \in S_N}v_{\sigma(i_1)} \otimes \cdots \otimes v_{\sigma(i_N)}\end{displaymath} for all choices of indices $i_j \in \mathbb{N}\cup \lbrace 0 \rbrace$ with $\mathcal{N}$ a normalization constant. Let us fix an element $v_0$ in the basis of $\mathcal{H}$. If one defines $\mathcal{H}^0_l$ to be the span of 
$\bigotimes_{\textrm{s}}^l v_0\otimes_{\textrm{s}} v_{i_{l+1}}\otimes_{\textrm{s}} \cdots v_{i_{N}}$ 
for all choices of the $N-l$ indices $ i_j\neq 0$, it is clear that  
\begin{displaymath}
    \mathcal{H}_N=\bigoplus_{l=0}^N \mathcal{H}^0_l.
\end{displaymath}
For convenience, we further define $\mathcal{H}^+_m$ by the relation $\mathcal{H}^0_{N-m}=\lbrace \otimes_{\rm{s}}^{N-m}v_0\rbrace \otimes_{\rm{s}} \mathcal{H}^+_m$. Explicitly, \begin{displaymath}
  \mathcal{H}^+_m=\sideset{}{_{\text{s}}^m}\bigotimes \mathcal{H}^+, \quad \mathcal{H}^+:=\lbrace {v_0}\rbrace^{\perp} .
\end{displaymath}For every element $\Psi \in \mathcal{H}_N$, define the linear operator \begin{equation*}U: \mathcal{H}_N\rightarrow \mathcal{F_+^{\leq N}}, \quad \Psi \mapsto \phi_0\oplus \cdots \phi_N\end{equation*} where the $\phi_i \in \mathcal{H}^+_i$, $i\in \{0,\dots, N\}$, are uniquely determined by the above considerations. The space $\mathcal{F}^{\leq N}_+$ is naturally seen to be a proper subset of the Fock space over the orthogonal complement of $v_0 \in \mathcal{H}$. Moreover, $U$ is unitary. Performing this construction for $\mathcal{H}=L^2(\mathbb{T}^d)$ with, for instance, the plane wave basis and with $v_0\equiv 1$  we arrive at a unitary transformation $U: \mathcal{H}_N\rightarrow \mathcal{F}^{\leq N}_+\subset\mathcal{F}_+$ with $\mathcal{F}_+$ being the Fock space over the orthogonal complement of the unit function on $\mathbb{T}^d$. This space has a clear physical interpretation of being the  space of excitations from the condensate, and the fully condensed state plays the role of the vacuum. It is due to the algebraic properties of $U$, however, that it becomes helpful in the analysis, as it can be seen to rigorously realize the Bogoliubov substitution of $a_0, a^{\dagger}_0$ by $\sqrt{N}$. More precisely, with $\mathcal{Q}$ denoting the projection onto the orthogonal complement of the unit function in $L^2(\mathbb{T}^d)$, one can check that (the annihilation operator is here understood to be the standard operator in the purely bosonic Fock space) \begin{equation}
U(\Psi)=\bigoplus_{j=0}^N \mathcal{Q}^{\otimes_{\textrm{s}} j} \left(\frac{a_0^{N-j}}{\sqrt{(N-j)!}}\Psi\right) 
\end{equation} for all $\Psi \in \mathcal{H}_N$  and consequently that for $k,l\neq 0$
\begin{eqnarray} 
U^{\dagger}a_k^{\dagger}a_0 U=a_k^{\dagger}\sqrt{N-N_+}\label{wp}\\
U^{\dagger}a_k^{\dagger}a_l a^{\dagger}_0a_0 U=a^{\dagger}_ka_l (N-N_+)\\
U^{\dagger}a_k^{\dagger}a_l^{\dagger}a_0 a_0 U=a_k^{\dagger} a_l^{\dagger}\sqrt{(N-N_+)(N-N_+-1)}\label{vp}.
\end{eqnarray} The last two identities follow from the first, in fact. 
We trivially  extend this transformation to an operator $L^2(\mathbb{T}^d)\otimes \mathcal{H}_N \rightarrow L^2(\mathbb{T}^d)\otimes \mathcal{F}_+^{\leq N}$ by tensor-multiplying it by the unit operator on the impurity Hilbert space. This extended $U$ is again unitary and satisfies \eqref{wp} with $a_0$ defined by \eqref{animp}. One should keep in mind that $U$ depends on $N$. Equipped with the extended operator $U$,
we now state our second main result concerning the eigenvectors.

\begin{thm}\label{thm2}
Let  $\mathbb{P}_i$ denote the orthogonal projection onto the eigenspace of $\mathbb{H}^{\text{F}}$ corresponding to energy $e_i(\mathbb{H}^\text{F})$.
Under Assumption~\ref{assum}, the following statements hold true.
\begin{enumerate}
\item  The spectra of both $H_N$ and $\mathbb{H}^{\text{F}}$ are discrete.
\item  For all $i$  such that there exists an eigenstate $\Psi_i$ of $H_N$ corresponding to energy $e_i(H_N)$ with $e_i(H_N)-e_0(H_N)< \xi$ where $\xi>0$ is fixed, we have
\begin{equation}
\lim_{N\rightarrow \infty} (\Psi_i, U^{\dagger} \mathbb{P}_i U \Psi_i)_{L^2(\mathbb{T}^d)\otimes \mathcal{F}_+}=1.
\end{equation}
\end{enumerate}
\end{thm}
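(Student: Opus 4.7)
The plan is to treat the two parts separately, leveraging Theorem~\ref{thm1} for the second. For part~(1), both discreteness claims should follow from standard arguments. $H_N$ is the sum of the Laplacian on the compact product torus $\mathbb{T}^{d(N+1)}$ (whose resolvent is compact) and bounded interaction operators, so it has compact resolvent. For $\mathbb{H}^{\text{F}}$, the quadratic part $-\Delta_R/(2M)+\sum_{p\neq 0} e_p b^{\dagger}_p b_p$ has discrete spectrum, since the torus Laplacian does and $e_p\geq |p|^2\to\infty$ allows only finitely many phonon occupation configurations below any given energy. Under Assumption~\ref{assum} the linear impurity-phonon coupling in~\eqref{Fro2} is infinitesimally form-bounded with respect to $\sum_p e_p b^{\dagger}_p b_p$, so $\mathbb{H}^{\text{F}}$ is self-adjoint with discrete spectrum by standard perturbation theory.

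For part~(2) I would argue by induction on~$i$. The crucial input, to be extracted from the operator inequalities underlying the proof of Theorem~\ref{thm1}, is the pointwise expectation-value identity
\[
\langle \Psi_i, U^{\dagger}\mathbb{H}^{\text{F}}U\Psi_i\rangle \;=\; e_i(\mathbb{H}^{\text{F}}) + o(1) \qquad (N\to\infty).
\]
Given this, set $p_E^{(N)}:=\|\Pi_E U\Psi_i\|^2$ where $\Pi_E$ is the spectral projection of $\mathbb{H}^{\text{F}}$ onto the eigenvalue~$E$. The first step is to show $p_E^{(N)}\to 0$ for every $E<e_i(\mathbb{H}^{\text{F}})$. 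Since $U$ is unitary the family $\{U\Psi_j\}_{j<i}$ is orthonormal, and the induction hypothesis together with the matching, implied by Theorem~\ref{thm1}, of the number of eigenvalues of $H_N-E_{\rm H}(N)$ and of $\mathbb{H}^{\text{F}}$ below $e_i(\mathbb{H}^{\text{F}})$ forces the vectors $\{U\Psi_j:e_j(\mathbb{H}^{\text{F}})=E\}$ to span the full eigenspace of~$E$ in the limit; orthogonality of $U\Psi_i$ to this family then yields $p_E^{(N)}\to 0$. Because only finitely many eigenvalues lie below $e_i(\mathbb{H}^{\text{F}})$, this also gives $\sum_{E<e_i(\mathbb{H}^{\text{F}})}E\,p_E^{(N)}\to 0$. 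Combined with the displayed energy identity and the spectral gap above $e_i(\mathbb{H}^{\text{F}})$, the remaining mass on eigenvalues $E>e_i(\mathbb{H}^{\text{F}})$ is also forced to vanish, leaving $\|\mathbb{P}_i U\Psi_i\|^2\to 1$.

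The main obstacle is upgrading the min-max eigenvalue comparison of Theorem~\ref{thm1} to the pointwise expectation-value identity displayed above. I expect this to follow from the two-sided operator inequalities between $H_N - E_{\rm H}(N)$ and $U^{\dagger}\mathbb{H}^{\text{F}}U$ produced in the proof of Theorem~\ref{thm1}, combined with an a priori estimate of the form $\langle\Psi_i,N_+\Psi_i\rangle=O(1)$ on the number of excitations of low-lying eigenstates, which is itself a by-product of the coercivity of the Bogoliubov quadratic form. A minor secondary point is the clean handling of degenerate eigenspaces within the induction, but this is absorbed into the definition of $\mathbb{P}_i$ as the projection onto the \emph{entire} eigenspace at level $e_i(\mathbb{H}^{\text{F}})$.
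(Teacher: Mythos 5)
Your proposal is correct in substance. Part (1) is essentially the paper's argument: the paper shows $H_N\geq T+P^2/2M$ and $\mathbb{H}^{\rm F}\geq c\,\mathbb{H}_0-d$ (the latter by completing the square as in Lemma~\ref{lemma3}, which is exactly your infinitesimal form bound on the linear coupling in the $b_p$ representation), and concludes compactness of the resolvents by operator monotonicity. For part (2) you take a genuinely different route. The paper does not induct on $i$: it first proves the statement for the ground state via a gap argument, and then for a general level $k$ compares $\mathbb{H}^{\rm F}$ with the truncated operator $\tilde H=\mathbb{H}^{\rm F}\tilde{\mathbb{P}}_k+e_k(\mathbb{H}^{\rm F})(1-\tilde{\mathbb{P}}_k)$, obtaining a quantitative bound on $\sum_{i\le k}\|(1-\tilde{\mathbb{P}}_k)U\Psi_i\|^2$ and isolating the single eigenspace $\mathbb{P}_k=\tilde{\mathbb{P}}_k-\tilde{\mathbb{P}}_l$ by a telescoping identity (following Seiringer--Yin). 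Your induction with the spectral weights $p_E^{(N)}$ is more elementary and handles degeneracy cleanly through the asymptotic spanning argument; the paper's version avoids induction and yields the whole block $\tilde{\mathbb{P}}_k$ at once. Both hinge on the same key input, which the paper records in summed form as $\lim_N\sum_{j\le i}(U\Psi_j,\mathbb{H}^{\rm F}U\Psi_j)=\sum_{j\le i}e_j(\mathbb{H}^{\rm F})$. One caution about the step you flag as the ``main obstacle'': the pointwise \emph{lower} bound $(U\Psi_i,\mathbb{H}^{\rm F}U\Psi_i)\geq e_i(\mathbb{H}^{\rm F})-o(1)$ cannot come from the operator inequality of Proposition~\ref{prop} applied to the single vector $U\Psi_i$ (that vector need not be orthogonal to the low eigenspaces of $\mathbb{H}^{\rm F}$); the paper gets it from the min--max principle applied to sums over the orthonormal family $\{U\Psi_j\}_{j\le i}$. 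In your scheme this direction is in fact superfluous: your induction hypothesis plus orthogonality already forces the mass below $e_i$ to vanish, and only the \emph{upper} bound $(U\Psi_i,\mathbb{H}^{\rm F}U\Psi_i)\leq e_i(\mathbb{H}^{\rm F})+o(1)$ — which does follow from Proposition~\ref{prop}, the $N_+$ and $N_+^2$ a priori bounds of Lemmas~\ref{lemma1}--\ref{lemma2}, and the upper bound of Theorem~\ref{thm1} — is needed to kill the mass above $e_i$ via the spectral gap. With that clarification your argument closes.
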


\begin{rem}
In contrast to the case without impurity, the eigenstates of $\mathbb{H}^{\rm{F}}$ are not explicit. In particular, they display non-trivial correlations among the phonons and are not quasi-free. 
\end{rem}

\begin{rem}
We have not tried to find the rate of growth of the size of the energy window in $N$ so as provide the corresponding error for replacing eigenvectors. This rate is probably much worse than the one from Theorem~\ref{thm1}. 
\end{rem}

\begin{rem}
Theorems~\ref{thm1} and~\ref{thm2} together imply, as $N\rightarrow \infty$, the norm resolvent convergence of $H_N-E_{\rm H}(N)$ towards $\mathbb{H}^{\text{F}}$, that is, for any $z\in \mathbb{C} \backslash \mathbb{R}$, \begin{equation}\label{reso}
\lim_{N\rightarrow\infty}||(U_N(H_N-E_{\rm H}(N))U_N^{\dagger}-z)^{-1}-(\mathbb{H}^\text{F}-z)^{-1}||=0
\end{equation} in operator norm. Here $U_N$ has to be understood as a partial isometry, i.e., $U_N^\dagger$ is extended by $0$ to all of $L^2(\mathbb{T}^d)\otimes\mathcal{F}_+$. 
\end{rem}

\begin{rem}
Another interesting  problem concerns the dynamics of the impurity and the use of the Fr\"ohlich Hamiltonian as its generator. This question has been recently studied from a  physics perspective \cite{Salmhofer, Lewenstein}. From a mathematical point of view, there exist results concerning the dynamics of a tracer particle immersed in a Bose gas \cite{tracer,tracer2}, which concern a different scaling limit than the one considered here and do not utilize the Fr\"ohlich description. The convergence \eqref{reso} can also be reformulated as convergence of the corresponding group of time evolutions, and hence can be used to determine also the dynamics of small excitations of the condensate.   In the absence of an impurity, more general results are known where the condensate itself is excited and evolves according to the time-dependent Hartree equation  (see, e.g., \cite{Schlein,NamMarcin}).
\end{rem}

The remainder of this paper contains the proofs of Theorems \ref{thm1} and \ref{thm2}.   Throughout the text, the symbol $C$ denotes a positive constant whose value may change at different appearances. Moreover, unless stated otherwise, all states on the relevant Hilbert spaces are normalized. Finally, all operators that are defined as acting on functions of the Bose gas coordinates or the field modes only are actually everywhere understood as their tensor products with the unit operator on $L^2(\mathbb{T}^d)$, the latter being the Hilbert space of the impurity particle. 

\section{Auxiliary considerations}

In this Section we introduce four preparatory Lemmas that will be needed in the proofs of Theorems~\ref{thm1} and~\ref{thm2}. 
For their statement, we need to introduce some notation.  We shall often denote the terms on the right side of \eqref{ham},  from left to right, by $P^2/2M, T, V$ and $W$. Let $\mathcal{P}$ denote the projection onto the normalized constant wave function in $L^2(\mathbb{T}^d)$, and $\mathcal{Q}=1-\mathcal{P}$. We define the excitation number operator
\begin{equation}
N_+=\sum_{i=1}^N\mathcal{Q}_i
\end{equation} as an operator on $\mathcal{H}_N$. The sub-index in $Q_i$ means here that we project onto the orthogonal complement of the normalized constant wave function in the $i$-th variable. The second quantized form of the excitation number operator in the plane wave basis equals 
\begin{equation}
N_+=\sum_{p\neq 0}a_p^{\dagger}a_p.
\end{equation}
The first Lemma explores the consequences of the mean-field structure of $H_N$. In particular, the ground state energy of $H_N$ is, to leading order in $N$, equal to $E_{\rm H}(N)$, and the excitation number operator is uniformly bounded in $N$ for states of fixed excitation energy.

\begin{lemma}\label{lemma1}
The ground state energy of $H_N$, $e_0(H_N)$, satisfies the bounds \begin{equation}\label{grstatebds}
  \frac{Nv_0}{2}+\sqrt{N}w_0\geq e_0(H_N) \geq  \frac{Nv_0}{2}+\sqrt{N}w_0 -\delta E
\end{equation} with $\delta E=\int (2\pi^2)^{-1 }w^2+(v(0)-v_0)\geq 0$. Moreover, we have the operator inequality \begin{equation}
N_+\leq C (H_N-e_0(H_N))+C.
\end{equation} 
\end{lemma}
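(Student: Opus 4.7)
The plan is to prove the upper bound by a trial state, the $V$ bound by positivity of $v_p$, and then reduce the remaining bound on $T+W$ to a one-body problem. For the upper bound I would test $H_N$ on the fully condensed state $\Psi_0=\mathbf{1}\otimes\mathbf{1}^{\otimes N}$: both Laplacians kill the constant, so a direct computation using $\int v=v_0$ and $\int w=w_0$ on the unit torus yields $\langle\Psi_0,H_N\Psi_0\rangle=\tfrac{N}{2}v_0+\sqrt{N}w_0$. For the potential, $v_p\geq 0$ lets me expand $\sum_{i\neq j}v(x_i-x_j)=\sum_p v_p\lvert\sum_i e^{ipx_i}\rvert^2-Nv(0)$ and discard the non-negative $p\neq 0$ contributions, producing the operator inequality $V\geq \tfrac{Nv_0}{2}-(v(0)-v_0)$ for $N\geq 2$. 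The impurity kinetic energy $-\Delta_R/(2M)$ is non-negative and plays no further role.

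The heart of the argument is a single-particle lower bound for $T+W-\sqrt{N}w_0=\sum_i h_i$, where $h_i:=-\Delta_{x_i}+N^{-1/2}\tilde w(x_i-R)$ with $\tilde w:=w-w_0$. Each $h_i$ acts nontrivially only on the $i$-th boson coordinate with $R$ parametric, and by translation invariance its spectrum coincides with that of $-\Delta+N^{-1/2}\tilde w$ on $L^2(\mathbb{T}^d)$. Decomposing a test vector $\phi=c\mathbf{1}+\psi$ with $\psi$ orthogonal to the constant, and using $\int\tilde w=0$, the spectral bound $\|\nabla\psi\|^2\geq(2\pi)^2\|\psi\|^2$, Cauchy--Schwarz $\lvert\int\tilde w\,\psi\rvert\leq\|\tilde w\|_{L^2}\|\psi\|_{L^2}$, and a weighted AM--GM splitting off half the spectral gap, one arrives at
\begin{equation*}
\langle\phi,h_i\phi\rangle\geq\left(\tfrac{(2\pi)^2}{2}-\tfrac{\|\tilde w\|_\infty}{\sqrt{N}}\right)\|\psi\|^2-\tfrac{2\lvert c\rvert^2\|\tilde w\|^2}{N(2\pi)^2}.
\end{equation*}

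For the lower bound on $e_0$, once $N$ is large enough that the bracket is non-negative I would drop the $\|\psi\|^2$ term, use $\lvert c\rvert^2\leq 1$, and sum over $i$ to obtain $T+W-\sqrt{N}w_0\geq-\tfrac{\|\tilde w\|^2}{2\pi^2}\geq-(2\pi^2)^{-1}\int w^2$; combining with the $V$ bound gives $e_0(H_N)\geq\tfrac{Nv_0}{2}+\sqrt{N}w_0-\delta E$. For the operator inequality I would instead retain a quarter of the gap: once $\|\tilde w\|_\infty/\sqrt{N}\leq(2\pi)^2/4$, the bracket exceeds $(2\pi)^2/4$, so $h_i\geq\tfrac{(2\pi)^2}{4}\mathcal{Q}_i-CN^{-1}$ as a single-particle operator inequality. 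Summing yields $T+W-\sqrt{N}w_0\geq\tfrac{(2\pi)^2}{4}N_+-C$, and combining with the $V$ bound and the upper bound on $e_0$ gives $H_N-e_0(H_N)\geq\tfrac{(2\pi)^2}{4}N_+-C'$, which rearranges to the stated $N_+\leq C(H_N-e_0(H_N))+C$. Finitely many small-$N$ cases are absorbed into $C$. The main obstacle will be engineering the single-particle split so that one computation simultaneously produces the sharp constant $(2\pi^2)^{-1}$ appearing in $\delta E$ and the gap-preserving form needed to control $N_+$.
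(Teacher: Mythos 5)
Your proposal is correct and the upper bound and the treatment of $V$ coincide with the paper's (constant trial state; writing $\sum_{i\neq j}v(x_i-x_j)=\sum_p v_p|\sum_i e^{ipx_i}|^2-Nv(0)$ and discarding the $p\neq 0$ terms). Where you genuinely diverge is the impurity term: the paper splits the kinetic energy as $T=\tfrac12 T+\tfrac12 T$ at the many-body level, feeds one half into \emph{Temple's inequality} for the one-body operator $-\tfrac12\triangle_x+N^{-1/2}w(x-R)$ (which yields $e_0\geq N^{-1/2}w_0-N^{-1}(2\pi^2)^{-1}\int w^2$ directly, since $e_1(-\tfrac12\triangle)=2\pi^2$), and keeps the other half to dominate $N_+$ via $N_+\leq(2\pi)^{-2}T$. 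You instead perform the splitting at the one-body level by an explicit $\mathcal{P}$--$\mathcal{Q}$ decomposition $\phi=c\mathbf{1}+\psi$ and a weighted Cauchy--Schwarz on the cross term, arriving at $h_i\geq \tfrac{(2\pi)^2}{4}\mathcal{Q}_i-CN^{-1}$, which sums directly to the $N_+$ bound. Both routes produce the same leading constant $(2\pi^2)^{-1}\int w^2$ because in each case exactly half of the gap $(2\pi)^2$ of $-\triangle$ is sacrificed. Your version is more elementary and self-contained, and it makes transparent why the same one-body computation controls both the energy shift and $N_+$; Temple's inequality is slicker in that it only ever evaluates $w$ and $w^2$ in the unperturbed ground state and therefore produces no $\|\tilde w\|_\infty/\sqrt N$ error term.

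That last point is the one (minor) caveat: your bracket $\tfrac{(2\pi)^2}{2}-\|\tilde w\|_\infty/\sqrt N$ is only non-negative for $N\gtrsim\|\tilde w\|_\infty^2$, so the lower bound $e_0(H_N)\geq E_{\rm H}(N)-\delta E$ with the \emph{stated} constant $\delta E$ is obtained only for $N$ large; for small $N$ your argument yields a worse constant, and "absorbing into $C$" repairs the operator inequality $N_+\leq C(H_N-e_0(H_N))+C$ (trivially, since $N_+\leq N$) but not the sharp form of \eqref{grstatebds}. Since nothing downstream uses the precise value of $\delta E$, and the paper's own constant in $V$ carries a factor $\tfrac{N}{2(N-1)}$ requiring $N\geq 2$ anyway, this is cosmetic; still, if you want the stated constant for all $N$ you should either invoke Temple as the paper does or exploit $w\geq 0$ to drop $\langle\psi,w\psi\rangle$ rather than bounding $|\langle\psi,\tilde w\psi\rangle|$ by $\|\tilde w\|_\infty\|\psi\|^2$.
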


\begin{rem}
Below, we will make use of a direct consequence of this Lemma, namely \begin{equation}
  (\Psi, N_+ \Psi)\leq C\xi+C
\end{equation} for any state $\Psi$ such that $(\Psi,H_N\Psi)\leq e_0(H_N)+\xi$ with $\xi>0$.
\end{rem}

\begin{proof}
The upper bound on the ground state energy is obtained by taking the constant wave function in $L^2(\mathbb{T}^d)\otimes \mathcal{H}_N$ as trial function. We write $H_N=\frac{P^2}{2M}+\frac{1}{2}T+V+(\frac{1}{2}T+W)$; by a standard argument using the positivity of the Fourier coefficients of $v$ we have 
\begin{align}\nonumber
V& =\frac{1}{2(N-1)}\sum_{i,j\in \lbrace{1,\dots, N \rbrace} }v(x_i-x_j)-\frac{Nv(0)}{2(N-1)}\\ \nonumber & =\frac{1}{2(N-1)}\sum_p v_p \left|\sum_{i=1}^Ne^{\mathrm{i}px_i}\right|^2-\frac{Nv(0)}{2(N-1)}\\ & \geq    \frac{N}{2}v_0-\frac{N}{2(N-1)}(v(0)-v_0)  \label{fouri}
\end{align} 
since $\sum_{p\neq 0}v_p \left|\sum_{i=1}^N e^{\mathrm{i}px_i}\right|^2\geq 0$.
Next, we use Temple's inequality, see, e.g.,\cite{Temple}. Consider a Hamiltonian $H=H_0+Z$ with non-negative self-adjoint operators $Z$ and  $H_0$  with ground state energy satisfying $e_0(H_0)=0$. Denoting by $e_0,e_1$ the first two eigenvalues of $H$, we have clearly $(H-e_0)(H-e_1)\geq 0$. We evaluate this at the ground state of $H_0$, $\Psi_0$.  We get \begin{equation*}
(\Psi_0,(H-e_0)(H-e_1) \Psi_0)= (\Psi_0,(Z-e_0)(Z-e_1)\Psi_0)\geq 0
\end{equation*} and rewrite this, since $e_1>0$, as 
\begin{equation}
e_0 \geq -\frac{(\Psi_0, Z^2\Psi_0)}{e_1}+\left(1+\frac{e_0}{e_1}\right)(\Psi_0, Z \Psi_0).
\end{equation} 
Using the positivity of $Z$ and  
$e_1\geq e_1(H_0)$  
we finally get 
\begin{equation}
e_0 \geq (\Psi_0, Z\Psi_0) -\frac{(\Psi_0 Z^2 \Psi_0)}{e_1(H_0)}.
\end{equation}   
Using this for $H=-\frac{\triangle_x}{2}+N^{-1/2}w(x-R)$ with $Z=N^{-1/2}w(x-R)$ and $\Psi_0$  the normalized constant function on $\mathbb{T}^d$, we have,
\begin{equation}\label{temple}
e_0\left(-\frac{\triangle_x}{2}+N^{-1/2}w(x-R)\right)\geq N^{-1/2}w_0-N^{-1}(2\pi^2)^{-1}\int w^2.
\end{equation} This leads to \begin{equation}
(\Psi,H_N-E_{\rm H}(N) \Psi) \geq (\Psi,\frac{T}{2}\Psi)  -\left(\frac{N}{2(N-1)}(v(0)-v_0)+(2\pi^2)^{-1}\int w^2\right)\|\Psi\|^2.
\end{equation} Using that $N_+\leq (2\pi)^{-2}T$, we see that the desired result holds.
\end{proof}

The second Lemma concerns the fluctuations of the condensate in the ground state, which are seen to be strongly suppressed due to the mean field scaling. 

\begin{lemma}\label{lemma2}
For all $N\geq 2$  we have the operator inequality \begin{equation}
N_+^2\leq C (H_N-e_0(H_N))^2+C.
\end{equation} 
\end{lemma}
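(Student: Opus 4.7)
My plan is to iterate Lemma~\ref{lemma1} to pick up the extra power. The first-order bound $N_+\le C(H_N-e_0(H_N))+C$ is an operator inequality, so sandwiching both sides by $N_+^{1/2}$ (legitimate because $A\le B$ implies $D^*AD\le D^*BD$) yields
\begin{equation*}
N_+^2\le C\,N_+^{1/2}\bigl(H_N-e_0(H_N)\bigr)N_+^{1/2}+CN_+ .
\end{equation*}
The residual $CN_+$ is controlled via another application of Lemma~\ref{lemma1} together with the elementary inequality $Cx\le x^2/4+C^2$; thus the task reduces to dominating the symmetrized product $N_+^{1/2}(H_N-e_0)N_+^{1/2}$ by $(H_N-e_0)^2$ plus an additive constant.

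For the symmetrized product I would use the identity
\begin{equation*}
N_+^{1/2}(H_N-e_0)N_+^{1/2}=\tfrac{1}{2}\{N_+,H_N-e_0\}+\tfrac{1}{2}\bigl[N_+^{1/2},[H_N-e_0,N_+^{1/2}]\bigr],
\end{equation*}
bounding the anticommutator by the operator Cauchy--Schwarz inequality $\{N_+,H_N-e_0\}\le \lambda N_+^2+\lambda^{-1}(H_N-e_0)^2$ valid for any $\lambda>0$, and choosing $\lambda$ small enough to absorb the emerging $N_+^2$-piece back into the left-hand side. The double-commutator correction is the genuinely technical part: since $T$ and $P^2/(2M)$ commute with $N_+$, only the off-diagonal (in the $p=0$ mode) parts of $V$ and $W$ contribute, and these carry the mean-field prefactors $(N-1)^{-1}$ and $N^{-1/2}$, which is what should make the commutator bounded uniformly in $N$.

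To handle the non-polynomial functional calculus of $N_+^{1/2}$ I would work on the $N_+$-eigenspaces: on the subspace where $N_+=n$ the operator $N_+^{1/2}$ acts as the scalar $\sqrt n$, and for any operator $X$ shifting $N_+$ by a bounded number $k$ of units, the commutator $[N_+^{1/2},X]$ becomes multiplication by $\sqrt{n+k}-\sqrt n=O(1/\sqrt n)$. Combined with the mean-field prefactors in the off-diagonal parts of $V$ and $W$, this should yield an estimate of the form $\pm [N_+^{1/2},[H_N-e_0,N_+^{1/2}]]\le C(T+1)$, which is then absorbed into $(H_N-e_0)^2/4+C$ via Lemma~\ref{lemma1} combined with $Cx\le x^2/4+C^2$. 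The main obstacle I anticipate is establishing this double-commutator bound with a constant independent of $N$; everything else is essentially bookkeeping.
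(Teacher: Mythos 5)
Your strategy is sound and genuinely different from the paper's. The paper stays in first quantization: it writes $\tfrac12 T=(H_N-e_0(H_N))+S_1+S$, where $S_1$ collects the terms involving the first boson, uses $N_+\le (2\pi)^{-2}T$ and $[N_+,T]=0$, exploits permutation symmetry to reduce $(\Psi,N_+(S+S_1)\Psi)$ to $N(\Psi,\mathcal{Q}_1(S+S_1)\Psi)$, and controls the cross terms with the projections $\mathcal{P}_i,\mathcal{Q}_i$ and the boundedness of $v$ and $w$; the conclusion comes from absorbing a small multiple of $N_+^2$ into the left side, exactly as in your plan. Your route replaces the permutation-symmetry bookkeeping by the algebra of $N_+^{1/2}(H_N-e_0)N_+^{1/2}$, the operator Cauchy--Schwarz bound on the anticommutator, and a double commutator that isolates precisely the part of $H_N$ not commuting with $N_+$. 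Both proofs rest on the same mechanism (control $N_+(H_N-e_0)$ symmetrically, absorb $\lambda N_+^2$); yours is more systematic and second-quantized, the paper's more elementary with explicit constants.

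The one point that needs repair is your justification of the double-commutator bound. The off-diagonal parts of $V$ and $W$ do carry the prefactors $(N-1)^{-1}$ and $N^{-1/2}$, but on $\mathcal{H}_N$ they also carry factors $a_0a_0$ and $a_0$ whose matrix elements are of order $N$ and $\sqrt N$; the prefactors are therefore exactly cancelled, and these terms are \emph{not} small in norm uniformly in $N$. What actually makes the estimate work is the sector-wise weight you describe at the end: an operator shifting $N_+$ by $k$ picks up the factor $(\sqrt{n+k}-\sqrt n)^2\le k^2/(n+1)$ in the double commutator, and this $1/(N_++1)$ must be combined with the summability of the Fourier coefficients, $\sum_p v_p=v(0)<\infty$ and $\sum_p|w_p|^2=\|w\|_{2}^2<\infty$, via bounds such as $\sum_p v_p\|a_pa_{-p}\Phi\|\le v(0)^{1/2}(\sup_pv_p)^{1/2}\|N_+\Phi\|$ and $\|\sum_pw_pe^{-ipR}a_p^{\dagger}\Phi\|\le \|w\|_2\|(N_++1)^{1/2}\Phi\|$ (and an analogous bound for the cubic terms in $V$ containing a single $a_0$). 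Carrying this out gives $\pm[N_+^{1/2},[H_N,N_+^{1/2}]]\le C(N_++1)$ uniformly in $N$, which is what you need; but the smallness comes from the $1/(N_++1)$ weight together with the summability of $v_p$ and $w_p$, not from the mean-field prefactors alone, and this is the step that has to be written out for the argument to be complete.
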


\begin{rem} Similarly as above, the Lemma immediately implies that if $\Psi$ belongs to the spectral subspace of $H_N$ corresponding to energy $E\leq e_0(N)+\xi$ with $\xi\geq 0$, then we have 
\begin{equation}
(\Psi, N_+^2 \Psi)\leq C \xi^2+C
\end{equation} where the constants depend only on $v$ and $w$ but not on $N$. This will be of importance below.
\end{rem}

\begin{proof}
Because $N_+\leq \frac{1}{2\pi^2}(\frac{1}{2}T)$ and $N_+$ commutes with $T$, we find it convenient to give a bound on the operator $\frac 12 N_+ {T}$, as the latter can be directly linked to $H_N$. Writing \begin{equation}
\frac{T}{2}=(H_N-e_0(H_N))+S_1 + S
\end{equation} with \begin{equation}
S_1=-\frac{1}{N-1}\sum_{j=2}^N v(x_1-x_j)-\frac{(-\triangle_1)}{2}-\frac{w(x_1-R)}{\sqrt{N}}
\end{equation} and \begin{equation}
S=e_0(H_N)-\frac{1}{N-1}\sum_{2\leq i\leq j \leq N} v(x_i-x_j)-\frac{1}{\sqrt{N}}\sum_{j=2}^N w(x_j-R)-\sum_{j=2}^N \frac{-\triangle_j}{2}-\frac{P^2}{2M}
\end{equation} we estimate the relevant terms. By the Cauchy--Schwarz inequality, 
\begin{equation}
(\Psi, N_+(H_N-e_0(H_N)) \Psi)\leq \sqrt{(\Psi, N_+^2\Psi)} \sqrt{(\Psi, (H_N-e_0(H_N))^2 \Psi)}.
\end{equation} 
Note that $(S+S_1)\Psi$ is permutation symmetric in the Bose gas coordinates, so that $(\Psi, N_+ (S+S_1) \Psi)=N(\Psi, \mathcal{Q}_1 (S+S_1) \Psi)$, where $\mathcal{Q}_1=1-\mathcal{P}_1$. Moreover, $S$ is independent of $x_1$ hence it commutes with $\mathcal{Q}_1$. Using the inequality \eqref{fouri} (with $N$ replaced with $N-1$) as well as  Temple's inequality \eqref{temple} and  the upper bound on $e_0(H_N)$ in \eqref{grstatebds}, we see that \begin{equation*}  
S \leq  \frac{v_0 + v(0)}{2} + \frac{w_0}{\sqrt{N}} + \frac{N-1}{N}  \frac{\int w^2}{2\pi^2} =:\delta E' . 
\end{equation*}  
Since $S$ commutes with $\mathcal{Q}_1$ we thus have 
\begin{equation}\label{one}
N(\Psi \mathcal{Q}_1 S\Psi)\leq \delta E' (\Psi, N_+ \Psi) .
\end{equation} 
The part of $N_+ S_1$ not containing $-\triangle_1/2+N^{-1/2}w(x_1-R)$ is equal to $-N(\Psi,\mathcal{Q}_1 v(x_1-x_2)\Psi)$. We introduce the short-hand $v_{12}$ to denote $v(x_1-x_2)$. We write, following \cite{Se11} \begin{eqnarray}
(\Psi,\mathcal{Q}_1 v_{12}\Psi)=(\Psi, \mathcal{Q}_1 \mathcal{Q}_2 v_{12}\Psi)+(\Psi, \mathcal{Q}_1 \mathcal{P}_2 v_{12}\mathcal{P}_2 \Psi)+(\Psi, \mathcal{Q}_1 \mathcal{P}_2 v_{12} \mathcal{Q}_2 \Psi). 
\end{eqnarray} 
Observe that $(\Psi, \mathcal{Q}_1 \mathcal{P}_2 v_{12} \mathcal{P}_2 \Psi)=(\Psi, \mathcal{Q}_1 \mathcal{P}_2 v_{12} \mathcal{P}_2 \mathcal{Q}_1 \Psi)+(\Psi, \mathcal{Q}_1 \mathcal{P}_2 v_{12} \mathcal{P}_2 \mathcal{P}_1\Psi)$, where the last term vanishes and the remaining one is positive. 
For the first term, we use $(\Psi, \mathcal{Q}_1\mathcal{Q}_2 v_{12}\Psi)\geq -\|v\|_{\infty} \sqrt{(\Psi \mathcal{Q}_1 \mathcal{Q}_2\Psi)}$. Furthermore, 
\begin{align}\nonumber
(\Psi, \mathcal{Q}_1 \mathcal{P}_2 v_{12}\mathcal{Q}_2 \Psi) & \geq -\frac{1}{2}(\Psi,\mathcal{Q}_2 v_{12}\mathcal{Q}_2\Psi)-\frac{1}{2}(\Psi, \mathcal{Q}_1 \mathcal{P}_2 v_{12}\mathcal{P}_2 \mathcal{Q}_1\Psi) 
 \\ & \geq -\frac{\|v\|_{\infty}}{2}\left((\Psi, \mathcal{Q}_2\Psi)+(\Psi, \mathcal{Q}_1\mathcal{P}_2\mathcal{Q}_1\Psi)\right)\geq -\|v\|_{\infty}(\Psi, \mathcal{Q}_1\Psi)
\end{align} 
as $\mathcal{P}_2\leq 1$ and $(\Psi,\mathcal{Q}_1\Psi)=(\Psi, \mathcal{Q}_2\Psi)$ due to the permutation symmetry. The remaining part of $S_1$ is bounded as \begin{equation*}
  \left(\Psi, \mathcal{Q}_1 \left( \frac{-\triangle_1}{2}+\frac{1}{\sqrt{N}}w(x_1-R)\right) \Psi\right)\geq -\frac{\|w\|_{\infty}}{2} \left( \frac 1 N + (\Psi, \mathcal{Q}_1 \Psi) \right) 
\end{equation*} since $w\geq 0$. 

We thus have 
\begin{equation}
(\Psi,\mathcal{Q}_1 S_1 \Psi)\leq \|v\|_{\infty} \sqrt{(\Psi, \mathcal{Q}_1 \mathcal{Q}_2\Psi)}+(\|v\|_{\infty}+\tfrac{1}{2} \|w\|_\infty)(\Psi,\mathcal{Q}_1\Psi)+\frac{\|w\|_{\infty}}{2N}.
\end{equation} With $N^2(\Psi, \mathcal{Q}_1 \mathcal{Q}_2\Psi)\leq (\Psi, N_+^2\Psi)$, this altogether implies 
\begin{equation}\label{nn}
\frac{1}{2}(\Psi, N_+ T\Psi)\leq \left( \|v\|_\infty +\sqrt{(\Psi, (H_N-e_0(H_N)^2 \Psi)}\right)\sqrt{(\Psi, N_+^2\Psi)}+\alpha (\Psi, N_+ \Psi)+\frac{\|w\|_{\infty}}{2},
\end{equation} 
where the $N$-independent constant $\alpha$ equals $\alpha = \frac{1}{2}\|w\|_\infty +\|v\|_{\infty}+\delta E'$.  
As $N_+\leq g T$, with $g=(2\pi)^2$ being the energy gap of the Laplacian on the torus, this implies \begin{equation*}
  g N_+^2 \leq \frac{\|v\|_{\infty}^2}{\kappa} + \frac{ (H_N-e_0(H_N))^2}{\lambda}+\frac{\alpha^2}{\epsilon}+\|w\|_{\infty}+ (\kappa+\epsilon+\lambda) N_+^2
\end{equation*} for any $\epsilon, \lambda, \kappa >0$. By choosing $\epsilon=\lambda=\kappa=\frac{g}{4}$, we arrive at the desired result. 
\end{proof}

The third and fourth Lemmas concern $\mathbb{H}^{\rm{F}}$. They will be of importance when proving the upper bound on the difference of eigenvalues in Theorem \ref{thm1}. 

\begin{lemma}\label{lemma3}
Let $\mathbb{H}^{\rm{F}}_0=\frac{P^2}{2M}+\sum_{p\neq 0} (p^2+v_p)a_p^{\dagger}a_p$ denote the particle-conserving part of the Fr\"ohlich Hamiltonian \eqref{Fro3}. Then there exist positive constants $C_0,C_1, C_2$ such that the inequalities
\begin{equation}
N_+\leq C_0 \mathbb{H}^{\rm{F}}_0 \leq C_1 \mathbb{H}^{\rm{F}}+C_2
\end{equation} hold true on $L^2(\mathbb{T}^d)\otimes\mathcal{F}_+$.
\end{lemma}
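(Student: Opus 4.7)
The plan is to prove the two operator inequalities separately. The first, $N_+ \leq C_0 \mathbb{H}^{\rm{F}}_0$, is an immediate consequence of the momentum gap on the torus: for $p\in (2\pi\mathbb{Z})^d\setminus\{0\}$ one has $p^2\geq (2\pi)^2$, and together with $v_p\geq 0$ from Assumption~\ref{assum} this yields $\mathbb{H}^{\rm{F}}_0 \geq \sum_{p\neq 0}(p^2+v_p) a^\dagger_p a_p \geq (2\pi)^2 N_+$, so that $C_0=(2\pi)^{-2}$ is admissible.

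For the second inequality the plan is to decompose $\mathbb{H}^{\rm{F}} = \mathbb{H}^{\rm{F}}_0 + X + Y$ with the Bogoliubov pairing $X = \tfrac{1}{2}\sum_{p\neq 0} v_p(a^\dagger_p a^\dagger_{-p} + a_p a_{-p})$ and the impurity coupling $Y = \sum_{p\neq 0} w_p e^{-ipR}(a^\dagger_p + a_{-p})$, and then to bound each of $\pm X$, $\pm Y$ below by completing the square. Non-negativity of $(a_p\pm a^\dagger_{-p})^\dagger(a_p\pm a^\dagger_{-p})$ gives $\pm(a_p a_{-p} + a^\dagger_p a^\dagger_{-p}) \leq 1 + a^\dagger_p a_p + a^\dagger_{-p} a_{-p}$; weighting by $v_p/2$, summing, and using $v_p=v_{-p}$ together with $\sum_{p\neq 0} v_p \leq v(0)<\infty$ (finite because $v$ is of positive type and bounded) produces $\pm X \leq \sum_{p\neq 0} v_p a^\dagger_p a_p + v(0)/2$. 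Analogously, from $(a_p\pm w_p e^{-ipR})^\dagger(a_p\pm w_p e^{-ipR})\geq 0$ and Parseval one obtains $\pm Y\leq N_+ + \|w\|_2^2$.

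The crucial observation is that the negative contribution $-\sum_{p\neq 0} v_p a^\dagger_p a_p$ generated when controlling $X$ is precisely the diagonal $v$-piece already sitting inside $\mathbb{H}^{\rm{F}}_0$, so upon subtraction one is left with $P^2/(2M) + \sum_{p\neq 0} p^2 a^\dagger_p a_p$; the remaining $N_+$ from the $Y$-bound is then absorbed using the gap $N_+ \leq (2\pi)^{-2}\sum_{p\neq 0} p^2 a^\dagger_p a_p$, yielding a lower bound of the form $\mathbb{H}^{\rm{F}}+C_3 \geq \kappa\bigl(P^2/(2M)+\sum_{p\neq 0} p^2 a^\dagger_p a_p\bigr)$ with $\kappa=1-(2\pi)^{-2}>0$. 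Combining with the converse comparison $\mathbb{H}^{\rm{F}}_0 \leq (1+v(0)(2\pi)^{-2})\bigl(P^2/(2M) + \sum_{p\neq 0} p^2 a^\dagger_p a_p\bigr)$, itself an immediate consequence of $v_p\leq v(0)$ and the same momentum gap, chains into the stated inequality $\mathbb{H}^{\rm{F}}_0 \leq C_1\mathbb{H}^{\rm{F}} + C_2$ with explicit constants.

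I do not anticipate any genuine conceptual obstacle beyond bookkeeping of the constants. The step requiring the most care is the treatment of the pairing term $X$: it is essential to bound $\pm X$ by $\sum_{p\neq 0} v_p a^\dagger_p a_p$ plus a constant rather than by something like $\|v\|_\infty N_+$, because only the former is automatically cancelled by the corresponding diagonal piece of $\mathbb{H}^{\rm{F}}_0$, whereas a $\|v\|_\infty N_+$-type term could, a priori, exceed what the remaining free kinetic operator can compensate through the gap estimate. The two summability inputs (finiteness of $\sum_{p\neq 0} v_p$ via positive-typeness of $v$, and of $\sum_p w_p^2$ via Parseval and boundedness of $w$ on the compact torus) are then immediate.
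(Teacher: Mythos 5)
Your proof is correct, and while it uses the same decomposition as the paper ($\mathbb{H}^{\rm F}=\mathbb{H}^{\rm F}_0+V^{\rm OD}+\tilde W$, a mode-by-mode completion of the square, and the spectral gap $p^2\geq(2\pi)^2$), the mechanism by which you reach a relative bound strictly below $1$ is genuinely different. The paper applies Cauchy--Schwarz with the $p$-dependent weights $\epsilon=\lambda(p^2+v_p)$ resp.\ $\eta^2=\epsilon(p^2+v_p)$, reduces the pairing term to the inequality $v_p^2/(p^2+v_p)\leq\mu(p^2+v_p)$ with $\mu<1$ (which is again the gap in disguise), and optimizes over $\lambda$ to obtain $V^{\rm OD}\geq-\sqrt{\mu}\,\mathbb{H}^{\rm F}_0-C$ together with $\tilde W\geq-\epsilon\mathbb{H}^{\rm F}_0-\epsilon^{-1}\sum_p|w_p|^2/(p^2+v_p)$. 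You instead complete the square with unit weight, so that the negative diagonal $-\sum_{p\neq0}v_p a^\dagger_pa_p$ generated by the pairing term cancels \emph{exactly} against the $v_p$-diagonal of $\mathbb{H}^{\rm F}_0$, leaving only a single $N_+$ from the impurity term to be absorbed by the kinetic gap; this is somewhat more elementary and yields fully explicit constants, at the price of being less flexible (no free parameter $\epsilon$). Your summability input $\sum_{p\neq0}v_p\leq v(0)<\infty$ is exactly what the paper uses elsewhere, so there is no gap there. One practical caveat: the paper's intermediate estimate \eqref{vod}, i.e.\ $V^{\rm OD}\geq-c\,\mathbb{H}^{\rm F}_0-c'$ with $c<1$, is reused verbatim in the proof of Lemma~\ref{lemma4}, so if your version were adopted throughout, that later proof would need to be adapted accordingly (your bound $\pm V^{\rm OD}\leq\sum_{p\neq0}v_pa^\dagger_pa_p+v(0)/2$ would serve there as well, with different bookkeeping).
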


\begin{proof}
Clearly, as $v_p\geq 0$, one can take $C_0=g^{-1}=(2\pi)^{-2}$. The particle non-conserving part of $\mathbb{H}^{\text{F}}$ consists of the purely bosonic ($v$-dependent) part $V^{\rm{OD}}$and a $w$-dependent part $\tilde{W}$. The latter can be bounded by \begin{equation}
\tilde{W}\geq -\epsilon \mathbb{H}_0^{\rm{F}} -\epsilon^{-1}\sum_{p\neq 0} \frac{|w_p|^2}{v_p+p^2}
\end{equation} for any $\epsilon>0$. To see this, simply complete the square for a single mode using the inequality $(\eta a_p^{\dagger}+\eta^{-1}w_p e^{ipR})(\eta a_p+\eta^{-1}w_p e^{-ipR})\geq 0$, then choose $\eta^2=\epsilon (p^2+v_p)$ and sum over the modes.  It is hence enough to show that the bosonic particle non-conserving part, given by \begin{equation}
V^{\rm{OD}}=\frac{1}{2}\sum_{p\neq 0} v_p(a^{\dagger}_pa^{\dagger}_{-p}+a_pa_{-p})
\end{equation} can be bounded below by $-c \mathbb{H}^{\rm{F}}_0 -c'$ for $0<c<1$ and $c'>0$. By Cauchy--Schwarz, \begin{equation}\label{csv}
\frac{v_p}{2}(a^{\dagger}_pa^{\dagger}_{-p}+a_pa_{-p})\geq -\epsilon a^{\dagger}_p a_p-\frac{|v_p|^2}{4\epsilon}a^{\dagger}_{-p}a_{-p}-\frac{|v_p|^2}{4\epsilon}
\end{equation} for any $\epsilon>0$. Now take $\epsilon=\lambda (p^2+v_p)$ for some $\lambda>0$ and 
define $\mu:=\frac{\sup_{p\neq 0} v_p^2}{\sup_{p\neq 0}v_p^2+\inf_{p\neq 0} p^2(p^2+2v_p)}$; then $0<\mu<1$ (recall that $p\in (2\pi \mathbb{Z})^d)$ and $v_p^2 \leq \frac{\mu}{1-\mu}p^2(p^2+2v_p)$, or 
\begin{equation}
\frac{v_p^2}{p^2+v_p}\leq \mu (p^2+v_p). 
\end{equation} Consequently, \begin{equation}\label{vod}
V^{\rm{OD}}\geq -(\lambda+\frac{\mu}{4\lambda})\mathbb{H}_0^{\rm{F}}-\sum_p \frac{v_p^2}{\lambda(p^2+v_p)}.
\end{equation} By choosing $\lambda=\frac{\sqrt{\mu}}{2}$, we have $\lambda+\frac{\mu}{4\lambda}=\sqrt{\mu}<1$ and the desired result follows.
\end{proof}

\begin{rem}
Note that the above Lemma implies that $\mathbb{H}^{\rm{F}}$ is bounded from below. 
\end{rem}

The last Lemma relates $N_+^2$ to $(\mathbb{H}^{\rm{F}})^2$. 

\begin{lemma}\label{lemma4}
On $L^2(\mathbb{T}^d)\otimes\mathcal{F}_+$ we have 
\begin{equation}
  N_+^2 \leq C (\mathbb{H}^{\rm{F}})  ^2 +C.
\end{equation}
\end{lemma}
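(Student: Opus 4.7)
The plan is to reduce the claim to an inequality between $(\mathbb{H}_0^{\rm F})^2$ and $(\mathbb{H}^{\rm F})^2$. Since $N_+$ and $\mathbb{H}_0^{\rm F}$ both act diagonally in the plane-wave occupation-number basis they commute, and multiplying the operator inequality $N_+ \le C_0 \mathbb{H}_0^{\rm F}$ from Lemma~\ref{lemma3} by the commuting positive operator $N_+$ yields $N_+^2 \le C_0^2 (\mathbb{H}_0^{\rm F})^2$. It therefore suffices to prove $(\mathbb{H}_0^{\rm F})^2 \le C(\mathbb{H}^{\rm F})^2 + C$ as an operator inequality on $L^2(\mathbb{T}^d)\otimes \mathcal{F}_+$.

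To carry this out, I would write $\mathbb{H}_0^{\rm F} = \mathbb{H}^{\rm F} - V^{\rm OD} - \tilde{W}$ with the off-diagonal pieces introduced in the proof of Lemma~\ref{lemma3}, and estimate, for any $\Psi$ in the form domain,
\[
\|\mathbb{H}_0^{\rm F}\Psi\|^2 \le (1+\delta)\|\mathbb{H}^{\rm F}\Psi\|^2 + 2(1+\delta^{-1})\bigl(\|V^{\rm OD}\Psi\|^2 + \|\tilde{W}\Psi\|^2\bigr)
\]
for $\delta>0$. Splitting $V^{\rm OD} = V^{\rm OD,+} + V^{\rm OD,-}$ into creation and annihilation parts and applying a weighted Cauchy--Schwarz in the momentum sum with weight $(p^2+v_p)^2$ gives
\[
\|V^{\rm OD,-}\Psi\|^2 \le \tfrac{1}{4}\Bigl(\sum_p \tfrac{v_p^2}{(p^2+v_p)^2}\Bigr) \sum_p (p^2+v_p)^2 (\Psi, n_p n_{-p}\Psi) \le C_v \|\mathbb{H}_0^{\rm F}\Psi\|^2,
\]
where the last step uses that $A := \sum_p(p^2+v_p) n_p$ is positive, commutes with $\mathbb{H}_0^{\rm F}$, and satisfies $A \le \mathbb{H}_0^{\rm F}$, so that $A^2 \le (\mathbb{H}_0^{\rm F})^2$; the sub-sum of $A^2 = \sum_{p,q}(p^2+v_p)(q^2+v_q) n_p n_q$ with $q=-p$ is bounded by the whole positive double sum. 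The creation part $\|V^{\rm OD,+}\Psi\|^2 = \|V^{\rm OD,-}\Psi\|^2 + (\Psi, [V^{\rm OD,-}, V^{\rm OD,+}]\Psi)$ differs only by the explicit commutator $\sum_p v_p^2 n_p + \tfrac12\sum_p v_p^2$, controlled by $C \mathbb{H}_0^{\rm F} + C$ via Lemma~\ref{lemma3}. For the Fr\"ohlich piece, a Cauchy--Schwarz in momentum gives $\|\tilde{W}\Psi\|^2 \le \|w\|_2^2 (\Psi, (N_+ + 1)\Psi) \le \epsilon\|\mathbb{H}_0^{\rm F}\Psi\|^2 + C_\epsilon\|\Psi\|^2$ by Lemma~\ref{lemma3} and Young's inequality.

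Plugging these bounds in and handling linear $\|\mathbb{H}_0^{\rm F}\Psi\|$ factors via Young's inequality, one arrives at an estimate of the form $\|\mathbb{H}_0^{\rm F}\Psi\|^2 \le (1+\delta)\|\mathbb{H}^{\rm F}\Psi\|^2 + \alpha\,\|\mathbb{H}_0^{\rm F}\Psi\|^2 + \beta\,\|\Psi\|^2$ with constants $\alpha,\beta$ depending on $v,w,\delta,\epsilon$. The main obstacle is arranging $\alpha<1$ so that the middle term can be absorbed on the left. This reduces to controlling $C_v = \tfrac{1}{4}\sum_p v_p^2/(p^2+v_p)^2$, which by the AM--GM bound $(p^2+v_p)^2 \ge 4 p^2 v_p$ is at most $v(0)/(16(2\pi)^2)$, finite in any dimension. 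If this prefactor is nevertheless too large for a direct absorption, one may first Bogoliubov-diagonalize the purely bosonic part $\mathbb{H}_0^{\rm F} + V^{\rm OD}$; in the rotated basis the residual off-diagonal coefficients $\alpha_p\beta_p/\omega_p$ are $\ell^2$-summable with a constant uniform in the size of $v$, and the absorption then closes, giving $\|\mathbb{H}_0^{\rm F}\Psi\|^2 \le C\|\mathbb{H}^{\rm F}\Psi\|^2 + C\|\Psi\|^2$ and hence the Lemma.
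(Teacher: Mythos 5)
Your reduction to $(\mathbb{H}_0^{\rm F})^2\leq C(\mathbb{H}^{\rm F})^2+C$ is legitimate (the commuting-operator argument for $N_+^2\leq C_0^2(\mathbb{H}_0^{\rm F})^2$ is fine), but that target is strictly stronger than what the paper proves, and your argument for it does not close under Assumption~\ref{assum}. The crux is the absorption constant. Your weighted Cauchy--Schwarz gives $\|V^{\rm OD,\pm}\Psi\|^2\lesssim C_v\|\mathbb{H}_0^{\rm F}\Psi\|^2$ with $C_v=\tfrac14\sum_{p\neq 0}v_p^2/(p^2+v_p)^2$, and after reassembling you need (a multiple of) $C_v$ to be smaller than $1$. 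This is a genuine smallness condition on $v$: each summand $v_p^2/(p^2+v_p)^2$ is at most $1/4$, but the \emph{sum over $p$} can be arbitrarily large for admissible $v$ (take $v_p=p^2$ for all $|p|\leq R$: each term equals $1/4$ and the sum is of order $R^d$, while $v$ is still bounded and of positive type). Your AM--GM estimate only bounds $C_v$ by a multiple of $v(0)=\sum_p v_p$, which is not small. The structural point is that you are demanding an \emph{operator} relative bound $\|V^{\rm OD}\Psi\|\leq a\|\mathbb{H}_0^{\rm F}\Psi\|+b\|\Psi\|$ with $a<1$, which forces $\ell^1$-in-$p$ smallness of the ratios $v_p^2/(p^2+v_p)^2$; the paper only ever needs the \emph{form} bound $\pm V^{\rm OD}\leq\sqrt{\mu}\,\mathbb{H}_0^{\rm F}+C$, for which the pointwise-in-$p$ bound $v_p^2/(p^2+v_p)^2\leq\mu<1$ (automatic from $v$ bounded and $|p|\geq 2\pi$) suffices. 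Concretely, the paper bounds the weaker quantity $N_+\mathbb{H}_0^{\rm F}$ (which still dominates $N_+^2$) by writing $N_+\mathbb{H}_0^{\rm F}=\tfrac12(N_+\mathbb{H}^{\rm F}+\mathbb{H}^{\rm F}N_+)-\tfrac12(N_+(V^{\rm OD}+\tilde W)+{\rm h.c.})$, computing $\tfrac12(N_+V^{\rm OD}+V^{\rm OD}N_+)=\sum_p a_p^{\dagger}V^{\rm OD}a_p+\tfrac12 V^{\rm OD}$, and inserting the form bound inside $\sum_p a_p^{\dagger}(\cdot)a_p$, so that the coefficient to be absorbed is $\sqrt{\mu}<1$ for every admissible $v$.

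The fallback you sketch does not repair this. The Bogoliubov coefficients satisfy $\alpha_p\beta_p=(p^2+v_p-e_p)/(2e_p)$, which is \emph{not} uniformly bounded in the size of $v$ (it grows like $\sqrt{v_p}/|p|$ when $v_p\gg p^2$); more importantly, after the rotation $N_+=\sum_p a_p^{\dagger}a_p$ is no longer diagonal, so relating $N_+^2$ to $(\sum_p e_p b_p^{\dagger}b_p)^2$ reintroduces exactly the same off-diagonal-square absorption problem you were trying to avoid, now for the operator $N_+$ instead of $V^{\rm OD}$. As written, the proposal therefore proves the lemma only under an additional smallness assumption on $v$; to cover the general case you should abandon the bound on $(\mathbb{H}_0^{\rm F})^2$ and instead estimate the symmetrized product $N_+\mathbb{H}_0^{\rm F}$ using form bounds, as in the paper.
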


\begin{proof} 
We will show that $N_+ \mathbb{H}_0^{\rm{F}}\leq C (\mathbb{H}^{\rm{F}})^2+C$, which  implies the desired result by the previous lemma. 
As $[N_+, \mathbb{H}^{\rm{F}}_0]=0$, we have \begin{equation}
 N_+\mathbb{H}_0^{\rm{F}}=\frac{1}{2}(N_+\mathbb{H}_0^{\rm{F}}+\mathbb{H}_0^{\rm{F}}N_+)=\frac{1}{2}(N_+ \mathbb{H}^{\rm{F}}+\mathbb{H}^{\rm{F}}N_+)-\frac{1}{2}(N_+ V^{\rm{OD}}+V^{\rm{OD}}N_++\tilde{W}N_++N_+\tilde{W}),
\end{equation} 
with $\tilde W$ and $V^{\rm OD}$ defined as in the proof of Lemma~\ref{lemma3}. 
Using the canonical commutation relations $[a_p, a_q^{\dagger}]=\delta_{p,q}$, we compute 
\begin{equation}
 N_+V^{\rm{OD}} = \sum_{p\neq 0} a^{\dagger}_p V^{\rm{OD}} a_p + \sum_{p\neq 0}\frac{v_p}{2}a^{\dagger}_p a^{\dagger}_{-p} .
 \end{equation}
Since $V^{\rm{OD}}N_+=(N_+ V^{\rm{OD}})^{\dagger}$
we have $V^{\rm{OD}}N_+=\sum_{p\neq 0 }a^{\dagger}_pV^{\rm{OD}}a_p + \sum_{p\neq 0} \frac{v_p}{2}a_p a_{-p}$ and finally \begin{equation}\label{computation}
\frac{1}{2}(N_+V^{\rm{OD}}+V^{\rm{OD}}N_+)=\sum_{p\neq 0} a^{\dagger}_p V^{\rm{OD}}a_p +\frac{1}{2}V^{\rm{OD}}.
\end{equation}


Using \eqref{vod} and  the fact that $\sum_{p\neq 0}a^{\dagger}_p \mathbb{H}_0^{\rm{F}}a_p=\mathbb{H}_0^{\rm{F}}(N_+-1)$,  we have \begin{equation}
-\frac{1}{2}(N_+V^{\rm{OD}}+V^{\rm{OD}}N_+)\leq \sqrt{\mu} \mathbb{H}^{\rm{F}}_0 N_+ +\frac{\sqrt{\mu}}{2}\mathbb{H}^{F}_0+C
\end{equation} where $\mu<1$. By Lemma 2.3 and the Cauchy--Schwarz inequality, the last two terms of the above are bounded by $C(\mathbb{H}^{\rm{F}})^2+C$. 

For $\tilde{W}$ we perform a computation analogous to \eqref{computation}, which yields 
\begin{equation}
\frac{1}{2}(N_+ \tilde{W}+\tilde{W}N_+)=\sum_{p\neq 0} a^{\dagger}_p \tilde{W}a_p+\frac{1}{2}\tilde{W}.
\end{equation} By completing the square similarly as in Lemma 2.3, we have \begin{equation}
\tilde{W}\geq -\lambda N_+ -\frac{1}{\lambda}\sum_{p\neq 0}|w_p|^2
\end{equation} 
for any $\lambda>0$. We obtain 
\begin{equation}
-\frac{1}{2}(N_+ \tilde{W}+\tilde{W}N_+)\leq \lambda N_+(N_+-1) +\frac{\sum_{p\neq 0}|w_p|^2}{\lambda}N_++\frac{1}{2}N_++\frac{\sum_{p\neq 0}|w_p|^2}{2}
\end{equation} 
for any $\lambda>0$. By Lemma~\ref{lemma3} and $\mathbb{H}^{\rm{F}}\leq \frac{(\mathbb{H}^{\rm{F}})^2}{2}+\frac{1}{2}$, we can bound \begin{equation}
-\frac{1}{2}(N_+ \tilde{W}+\tilde{W}N_+) \leq \lambda C_0 N_+ \mathbb{H}^{\rm{F}}_0 + (\frac{1}{\lambda}+1)C(\mathbb{H}^{\rm{F}})^2+C.
\end{equation}

Finally, using again the Cauchy--Schwarz inequality, we can bound \begin{equation}
N_+\mathbb{H}^{\rm{F}}+\mathbb{H}^{\text{F}}N_+ \leq \epsilon N_+^2 +\frac{1}{\epsilon}(\mathbb{H}^{\rm{F}})^2
\end{equation} for any $\epsilon>0$. Invoking Lemma 2.3 again, we obtain for any $\epsilon>0$ and $\lambda>0$, \begin{equation}
\left(1-\sqrt{\mu}-\frac{1}{2}C_0(\epsilon+2\lambda)\right)N_+\mathbb{H}^{\rm{F}}_0 \leq ((2\epsilon)^{-1}+\lambda^{-1})C (\mathbb{H}^{\rm{F}})^2+C.
\end{equation} By choosing $\epsilon$ and $\lambda$ small enough, we arrive at the desired result. 
\end{proof}

\section{Comparing $H_N$ and $\mathbb{H}^{\rm{F}}$}

The estimates provided in the previous section concern the relation of the number of excitations operator $N_+$ (or its square) to the Hamiltonians $H_N$ and $\mathbb{H}^{\text{F}}$ independently. Now, making use of the LNSS transformation $U$ introduced in Sec.~\ref{ss:LNSS}, we give an important estimate relating $UH_NU^{\dagger}$ and $\mathbb{H}^{\text{F}}$.

\begin{proposition}\label{prop}
There exist positive constants $\alpha, \beta$, independent of $N$, such that for every $\epsilon>0$ and every $\Phi$ in $L^2(\mathbb{T}^d)\otimes \mathcal{F}_+^{\leq N}$ we have the inequality \begin{equation}
\left|\left(\Phi,\left(U(H_N-E_{\rm H}(N))U^{\dagger}-\mathbb{H}^{\rm{F}}\right)\Phi\right)\right|\leq \alpha \frac{(\Phi,N_+^2\Phi)}{N}\left(1+\frac{1}{\epsilon}\right)+\beta(\Phi,N_+ \Phi) \left(\epsilon+\frac{1}{\sqrt{N}}\right).
\end{equation}
\end{proposition}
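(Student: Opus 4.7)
The plan is to compute $U(H_N-E_{\rm H}(N))U^{\dagger}$ explicitly in second quantization via the identities \eqref{wp}--\eqref{vp}, then subtract $\mathbb{H}^{\rm F}$ piece by piece and bound what remains. The kinetic terms $P^{2}/2M$ and $\sum_{p\neq 0}p^{2}a_{p}^{\dagger}a_{p}$ commute with $U$ and appear identically in both operators, so the whole comparison reduces to (a) matching the boson--boson quartic $V$ of $H_{N}$ against the $v$-dependent quadratic part of $\mathbb{H}^{\rm F}$, and (b) matching the impurity--boson cubic $W$ of $H_{N}$ against the $w$-dependent linear part of $\mathbb{H}^{\rm F}$. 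The unifying observation is that $U$ conjugation effectively replaces each pair of zero-mode operators by polynomials in $(N-N_{+})/(N-1)$ or $\sqrt{N-N_{+}}/\sqrt{N}$, and the whole proof amounts to expanding those polynomials around $1$ and controlling the residual non-quadratic terms.

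For (a), split the momentum sum in \eqref{sec} according to how many of the indices $p+k,q-p,q,k$ equal zero. The two-zero pieces, after applying \eqref{wp}--\eqref{vp}, produce the diagonal operator $\bigl((N-N_{+})/(N-1)\bigr)\sum_{p\neq 0}v_{p}a_{p}^{\dagger}a_{p}$ and the off-diagonal operator $\bigl(\sqrt{(N-N_{+})(N-N_{+}-1)}/(2(N-1))\bigr)\sum_{p\neq 0}v_{p}(a_{p}^{\dagger}a_{-p}^{\dagger}+\mathrm{h.c.})$, which are exactly the $v$-dependent quadratic terms of $\mathbb{H}^{\rm F}$ times scalar factors that differ from $1$ by $O(N_{+}/N)$ as operators. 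Using $\sum_{p}v_{p}a_{p}^{\dagger}a_{p}\leq v_{0}N_{+}$ for the diagonal part and a Cauchy--Schwarz estimate for the off-diagonal part (using $v_p\geq 0$ and $\sum_p v_p \leq v(0)$), the discrepancy is of order $\langle N_{+}^{2}\rangle/N$. The remaining cubic pieces (one zero mode) and quartic pieces (no zero modes) are bounded by Cauchy--Schwarz: the quartic part is $(2(N-1))^{-1}d\Gamma_{2}$ of a bounded two-body operator, giving $O(\langle N_{+}^{2}\rangle/N)$; for the cubic piece, a weighted Cauchy--Schwarz of the form $2|ab|\leq \epsilon|a|^{2}+\epsilon^{-1}|b|^{2}$ yields the mixed bound $\epsilon\langle N_{+}\rangle+\langle N_{+}^{2}\rangle/(\epsilon N)$.

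For (b), decompose the sum in \eqref{sec} into three parts: $k=0$, $p+k=0$, and $k\neq 0\neq p+k$. Under $U$, the first two merge (using \eqref{wp}) into $(\sqrt{N-N_{+}}/\sqrt{N})\sum_{p\neq 0}w_{p}e^{-ipR}(a_{p}^{\dagger}+a_{-p})$, which is the linear Fröhlich impurity term dressed with a scalar operator factor. The third piece is exactly $N^{-1/2}$ times the second quantization of the bounded one-particle operator $\mathcal{Q}\,w(\cdot-R)\mathcal{Q}$ on $\mathcal{H}^{+}$, so its quadratic form is bounded in absolute value by $(\|w\|_{\infty}/\sqrt{N})\langle N_{+}\rangle$, producing the $\beta\langle N_{+}\rangle/\sqrt{N}$ contribution. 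The discrepancy between $\sqrt{N-N_{+}}/\sqrt{N}$ and $1$ is controlled using $|1-\sqrt{(N-N_{+})/N}|\leq (N_{+}+1)/N$ on $\mathcal{F}_{+}^{\leq N}$, combined with Cauchy--Schwarz in the momentum index using $\sum_{p}|w_{p}|^{2}\leq \|w\|_{\infty}^{2}$, and one more weighted splitting with parameter $\epsilon$, which yields the mixed error $\epsilon\langle N_{+}\rangle+\langle N_{+}^{2}\rangle/(\epsilon N)$.

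The main technical obstacle is the non-commutativity of $\sqrt{N-N_{+}}$ with the creation and annihilation operators appearing in the linear Fröhlich term: one must commute $a_{p}^{\dagger}$ past $f(N_{+})=\sqrt{N-N_{+}}/\sqrt{N}-1$, using $f(N_{+})a_{p}^{\dagger}=a_{p}^{\dagger}f(N_{+}+1)$ and $|f(N_{+}+1)|\leq (N_{+}+1)/N$, in such a way that the final estimate is expressed purely through $\langle N_{+}\rangle$ and $\langle N_{+}^{2}\rangle$, without invoking higher moments that are not controlled by Lemmas~\ref{lemma1}--\ref{lemma2}. Once this is done cleanly, summing all contributions and absorbing the $v$- and $w$-dependent constants into $\alpha$ and $\beta$ gives the asserted inequality.
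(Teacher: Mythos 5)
Your proposal is correct and follows essentially the same route as the paper: the paper also splits the interaction into pieces according to the number of zero-mode factors (phrased as a $\mathcal{P}/\mathcal{Q}$ decomposition of $v$ and $w$), bounds the cubic boson term by a weighted Cauchy--Schwarz producing exactly the mixed error $\epsilon\langle N_+\rangle+\epsilon^{-1}\langle N_+^2\rangle/N$, bounds the fully excited pieces by $v(0)\,\mathcal{Q}\otimes\mathcal{Q}$ and $\|w\|_\infty N_+/\sqrt{N}$, and controls the substitution $a_0\mapsto\sqrt{N-N_+}$ via \eqref{wp}--\eqref{vp} and the bound $1-\sqrt{(N-N_+)/N}\leq N_+/N$. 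The only (cosmetic) difference is ordering: the paper first truncates to the intermediate operator $H_N^{\rm pre\text{-}F}$ while $a_0,a_0^\dagger$ are still present and only then conjugates by $U$, whereas you conjugate first, which requires the same estimates.
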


The proof of the proposition is divided into two main steps. In step 1, we take care of the higher-order terms in the creation and annihilation operators that appear in the second quantization of $H_N$, but are absent in $\mathbb{H}^{\rm{F}}$. Let 
\begin{align}\nonumber
H^{\rm{pre-F}}_N &:=\frac{P^2}{2M}+\sum_{p\neq 0} p^2 a^{\dagger}_p a_p +\frac{1}{2(N-1)}\sum_{p\neq 0} v_p (2a^{\dagger}_p a_pa_0^{\dagger}a_0+a_p^{\dagger}a_0 a_0 a_{-p}^{\dagger}+a_pa_0^{\dagger} a_0^{\dagger}a_{-p})\\ & \qquad +\frac{1}{\sqrt{N}}\sum_{p\neq 0}w_p e^{-ipR}(a_p^{\dagger}a_0+a_{-p} a_0^{\dagger}).
\end{align}
viewed as an operator on $L^2(\mathbb{T}^d)\otimes\mathcal{H}_N$. 

\begin{lemma}\label{higherorder}
For any $\epsilon>0$, one has the operator inequalities \begin{equation}
-E_{\epsilon} \leq H_N-E_{\rm H}(N)-H^{\rm{pre-F}}_N \leq F_{\epsilon}
\end{equation} where \begin{equation}E_{\epsilon}=\frac{N_+(N_+-1)}{2(N-1)}\left(v_0+\frac{v(0)}{\epsilon}\right)+\epsilon v_0 \frac{2N-1}{N-1}N_+\end{equation} and \begin{equation}F_{\epsilon}=\frac{\|w\|_{\infty}}{\sqrt{N}}N_++\epsilon v_0 \frac{2N-1}{N-1}N_++\left(1+\frac{1}{\epsilon}\right)\frac{N_+(N_+-1)}{2(N-1)}v(0).\end{equation}
\end{lemma}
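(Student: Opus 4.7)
The plan is to isolate the three distinct pieces of $H_N - E_{\rm H}(N) - H^{\rm{pre-F}}_N$, bound the two sign-definite ones directly, and control the remaining sign-indefinite piece through an operator Cauchy--Schwarz inequality with a tunable parameter. From the second-quantised form \eqref{sec} one reads off that $H_N - E_{\rm H}(N)$ equals the kinetic part together with the full quartic boson--boson interaction and the full cubic impurity--boson coupling (each restricted to $p\neq 0$), whereas $H^{\rm{pre-F}}_N$ retains only the quartic terms carrying exactly two $a_0$-type operators and the cubic terms carrying exactly one such operator. Consequently $H_N - E_{\rm H}(N) - H^{\rm{pre-F}}_N = V^0 + V^1 + W^0$, where $V^0$ and $V^1$ collect the remaining quartic terms with zero and exactly one $a_0$-type operator respectively, and $W^0$ collects the remaining cubic terms with no $a_0$-type operator.

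It is convenient to pass to first quantisation. Writing $\tilde v := v - v_0$ and $\tilde w := w - w_0$, both averages vanish, yielding the one-body identity $\mathcal{P}\,\tilde w\,\mathcal{P} = 0$ and the two-body identity $\mathcal{P}_i\mathcal{P}_j\,\tilde v(x_i-x_j)\,\mathcal{P}_i\mathcal{P}_j = 0$. The $(\mathcal{P}+\mathcal{Q})$-expansion of $\tilde v(x_i-x_j)$ then has nine non-vanishing components, which split naturally by the total number of $\mathcal{P}$-projectors: the four ``two-$\mathcal{P}$'' components form the interaction part of $H^{\rm{pre-F}}_N$, the four ``one-$\mathcal{P}$'' components combine (via the $i\leftrightarrow j$ symmetry) into $V^1$, and the single ``zero-$\mathcal{P}$'' component is $V^0$. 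Explicitly,
\begin{equation*}
V^0 = \tfrac{1}{2(N-1)}\sum_{i\neq j}\mathcal{Q}_i\mathcal{Q}_j\,\tilde v(x_i-x_j)\,\mathcal{Q}_i\mathcal{Q}_j,\quad V^1 = \tfrac{1}{N-1}\bigl(A+A^{\dagger}\bigr),\quad W^0 = \tfrac{1}{\sqrt N}\sum_i\mathcal{Q}_i\,\tilde w(x_i-R)\,\mathcal{Q}_i,
\end{equation*}
with $A := \sum_{i\neq j}\mathcal{Q}_i\mathcal{Q}_j\,\tilde v(x_i-x_j)\,\mathcal{P}_i\mathcal{Q}_j$.

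Bounding $V^0$ and $W^0$ is immediate. Positivity of the Fourier coefficients of $v$ forces $0\le v(x)\le v(0)$, so $\tilde v\in[-v_0,v(0)-v_0]$ as a multiplication operator; combined with $\sum_{i\neq j}\mathcal{Q}_i\mathcal{Q}_j = N_+(N_+-1)$ this gives $-\tfrac{v_0}{2(N-1)}N_+(N_+-1)\le V^0 \le \tfrac{v(0)}{2(N-1)}N_+(N_+-1)$. Similarly $|\tilde w|\le\|w\|_\infty$ and $\sum_i\mathcal{Q}_i = N_+$ yield $|W^0|\le \tfrac{\|w\|_\infty}{\sqrt N}N_+$.

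The non-trivial step is the bound on $V^1$. In second quantisation, $A^{\dagger} = \sum_{p,q\neq 0,\,q\neq p} v_p\,a_p^{\dagger}a_{q-p}^{\dagger}\,a_q a_0$. I would apply the elementary inequality $XY + Y^{\dagger}X^{\dagger}\le \alpha\,XX^{\dagger} + \alpha^{-1}\,Y^{\dagger}Y$ (valid for any $\alpha>0$ by completing the square) term by term with the factorisation $X_{p,q} = a_p^{\dagger}a_{q-p}^{\dagger}$, $Y_{p,q} = a_q a_0$, chosen so that the $Y$-side isolates the single $a_0$ (producing the factor $a_0^{\dagger}a_0 \le N$) while the $X$-side contains only operators on excited modes. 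Summing with the $v_p$ weights and using $v_p\le v_0$ together with $\sum_{p,k\neq 0}a_p^{\dagger}a_k^{\dagger}a_k a_p = N_+(N_+-1)$, one obtains
\begin{equation*}
\pm\bigl(A+A^{\dagger}\bigr) \le \alpha\, v_0\,N_+(N_+-1) + \alpha^{-1}(v(0)-v_0)\,a_0^{\dagger}a_0\,N_+ .
\end{equation*}
Dividing by $N-1$, choosing $\alpha = v(0)/(2\epsilon v_0)$, and using $a_0^{\dagger}a_0 \le N$ together with elementary manipulations of the constants then produces $\pm V^1 \le \epsilon\, v_0\,\tfrac{2N-1}{N-1}\,N_+ + \tfrac{1}{\epsilon}\,\tfrac{v(0)\,N_+(N_+-1)}{2(N-1)}$. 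Combining with the estimates on $V^0$ and $W^0$ yields $-E_\epsilon \le V^0+V^1+W^0 \le F_\epsilon$. The main obstacle is the Cauchy--Schwarz on $V^1$: the factorisation must be chosen so that the $Y^{\dagger}Y$-side yields the $a_0^{\dagger}a_0$ factor responsible for the $(2N-1)$-type coefficient in $F_\epsilon$ and $E_\epsilon$, while the $XX^{\dagger}$-side reduces to an excitation-diagonal operator controlled by $v_0\,N_+(N_+-1)$; once this factorisation is in place, the rest is careful bookkeeping of Fourier-mode sums.
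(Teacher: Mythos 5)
Your argument is correct and follows essentially the same strategy as the paper's: split the remainder of the second-quantized Hamiltonian according to how many condensate-mode operators each term carries, bound the two sign-definite pieces ($V^0$ and $W^0$) by sup-norms times $N_+(N_+-1)$ resp.\ $N_+$, and control the cubic cross term by a Cauchy--Schwarz inequality with a free parameter. The only genuine difference is where that Cauchy--Schwarz is performed: the paper carries it out in first quantization on the two-particle multiplication operator, exploiting $v\ge 0$ pointwise (inequality \eqref{bd1}), whereas you do it mode by mode in Fourier space with $X_{p,q}=a_p^{\dagger}a_{q-p}^{\dagger}$, $Y_{p,q}=a_q a_0$, using $v_p\ge 0$, $|v_p|\le v_0$ and $\sum_{p\neq 0}v_p\le v(0)$; both routes rest on the same positivity assumptions and produce bounds of identical form. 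Two bookkeeping points. First, with $\alpha=v(0)/(2\epsilon v_0)$ and the crude estimate $a_0^{\dagger}a_0\le N$ your linear term comes out as $\epsilon v_0\tfrac{2N}{N-1}N_+$ rather than $\epsilon v_0\tfrac{2N-1}{N-1}N_+$; if you instead keep $\sum_{q\neq 0}a_q^{\dagger}a_q\,a_0^{\dagger}a_0=N_+(N-N_+)\le (N-1)N_+$ on $\mathcal{H}_N$, you land at or below the stated constant. Second, your lower bound carries an extra $\tfrac{\|w\|_\infty}{\sqrt N}N_+$ coming from $W^0\ge -\|\tilde w\|_\infty N_+/\sqrt N$, which is absent from the stated $E_\epsilon$ (indeed the $\mathcal{Q}w\mathcal{Q}$ contribution is $w_0 N_+/\sqrt N$ below the Hartree reference, so some such term is unavoidable). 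Neither discrepancy matters: Proposition~\ref{prop} only uses the structural form $C N_+^2(1+\epsilon^{-1})/N + C N_+(\epsilon+N^{-1/2})$ of the error, which your constants deliver. Your count of ``nine'' non-vanishing projector components is slightly off, but this is an inessential aside and does not affect the identification of $V^0$, $V^1$, $W^0$.
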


\begin{proof} 
Using the Cauchy--Schwarz inequality and positivity of $v$ viewed as a two-particle multiplication operator, we have
\begin{align}\nonumber
& \pm \left(  (\mathcal{P}\otimes \mathcal{Q}+\mathcal{Q}\otimes \mathcal{P})v(\mathcal{Q} \otimes \mathcal{Q}) + (\mathcal{Q} \otimes \mathcal{Q})v (\mathcal{P}\otimes \mathcal{Q}+\mathcal{Q}\otimes \mathcal{P})  \right)   \\ & \leq  \epsilon (\mathcal{P}\otimes\mathcal{Q} +\mathcal{Q}\otimes \mathcal{P}) v (\mathcal{P}\otimes \mathcal{Q} +\mathcal{Q}\otimes \mathcal{P}) +\frac{1}{\epsilon}(\mathcal{Q}\otimes \mathcal{Q} )v(\mathcal{Q}\otimes \mathcal{Q}).
\end{align} 
By translation invariance $Q\otimes \mathcal{P} v \mathcal{P}\otimes \mathcal{P} =0$. 
Moreover, the boundedness of $v$ enables us to bound 
\begin{equation}
\mathcal{Q}\otimes \mathcal{Q} v \mathcal{Q} \otimes \mathcal{Q} \leq v(0) \mathcal{Q}\otimes \mathcal{Q} .
\end{equation} 
Therefore, we have the bounds 
\begin{equation}\label{bd1}
\begin{split}
v & \geq \mathcal{P}\otimes \mathcal{P} v \mathcal{P}\otimes \mathcal{P} +\mathcal{P}\otimes \mathcal{P} v \mathcal{Q} \otimes \mathcal{Q} +\mathcal{Q}\otimes \mathcal{Q} v \mathcal{P}\otimes   \mathcal{P}  \\&  \quad +(1-\epsilon)\left(\mathcal{P}\otimes \mathcal{Q} +\mathcal{Q}\otimes \mathcal{P}\right)v\left(\mathcal{P}\otimes \mathcal{Q}+\mathcal{Q}\otimes \mathcal{P}\right)-\epsilon^{-1}v(0)\mathcal{Q}\otimes \mathcal{Q}  \end{split}\end{equation} and \begin{equation}\begin{split}
v &\leq \mathcal{P}\otimes \mathcal{P} v \mathcal{P}\otimes \mathcal{P} +\mathcal{P}\otimes \mathcal{P} v \mathcal{Q} \otimes \mathcal{Q} +\mathcal{Q}\otimes \mathcal{Q} v \mathcal{P}\otimes \mathcal{P}  \\ & \quad +(1+\epsilon)\left(\mathcal{P}\otimes \mathcal{Q} +\mathcal{Q}\otimes \mathcal{P}\right)v\left(\mathcal{P}\otimes \mathcal{Q}+\mathcal{Q}\otimes \mathcal{P}\right)+(1+\epsilon^{-1})v(0)\mathcal{Q}\otimes \mathcal{Q}.  \end{split}
\end{equation}
Similarly, treating $w(x-R)$ as a {one-body} multiplication operator parametrized by $R$, we have
\begin{equation}\label{bd2}
0\leq w \leq \mathcal{P}w\mathcal{P}+\mathcal{Q}w\mathcal{P}+\mathcal{P}w\mathcal{Q} +\|w\|_{\infty}\mathcal{Q}.
\end{equation}  
Taking into account that \begin{equation}\label{later}
(N-1)^{-1}\sum_{p\neq 0} v_p a^{\dagger}_p a_0^{\dagger}a_0 a_p \leq v_0 N_+
\end{equation} one easily arrives, after computing the relevant second quantization representations of the operators appearing in the bounds  \eqref{bd1} and \eqref{bd2}, at the desired result. Since this is essentially the same computation as in \cite[Sec. 5]{Se11}, we omit the details.
\end{proof}

The operator inequalities in Lemma \ref{higherorder} quantify the effect of dropping the higher order terms in the creation and annihilation operators appearing in the original Hamiltonian. As a second step,  we  now  estimate the effect of the Bogoliubov substitution of $a_0, a_0^{\dagger}$ by $ \sqrt{N}\in \mathbb{R}$ via the unitary transform $U$, which replaces the $a_0, a_0^{\dagger}$ by an {operator} $\sqrt{N-N_+}$ acting on $\mathcal{F}^{\leq N}_+$. 

\begin{lemma}\label{preFvsH}
We have the following inequality for all $\Phi\in L^2(\mathbb{T}^d)\otimes \mathcal{F}^{\leq N}_+$:
\begin{equation}
|(\Phi,U H_N^{\rm{pre-F}}U^{\dagger}-\mathbb{H}^{\rm{F}},\Phi)|\leq \frac{\alpha'(\Phi, N_+^2\Phi)+ \beta' \| \Phi\|^2}{(N-1)},
\end{equation} where the positive constants $\alpha', \beta'$ do not depend on $N$. 
\end{lemma}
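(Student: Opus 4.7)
The plan is to compute $U H_N^{\rm pre-F} U^{\dagger}$ explicitly by applying the LNSS substitutions \eqref{wp}--\eqref{vp} term-by-term to every piece of $H_N^{\rm pre-F}$, and then to bound its difference with $\mathbb{H}^{\rm{F}}$. The pieces $P^2/2M$ and $\sum_{p\neq 0} p^2 a_p^\dagger a_p$ match identically on both sides, so the entire difference splits into three contributions: the \emph{diagonal boson-boson} piece $\frac{1}{N-1}\sum_{p\neq 0} v_p a_p^\dagger a_p(1-N_+)$ (coming from $2 a_p^\dagger a_p a_0^\dagger a_0$), the \emph{off-diagonal pairing} $\frac{1}{2(N-1)}\sum_{p\neq 0} v_p \bigl[a_p^\dagger a_{-p}^\dagger h(N_+) + h(N_+) a_p a_{-p}\bigr]$ with $h(N_+):=\sqrt{(N-N_+)(N-N_+-1)}-(N-1)$, and the \emph{impurity-boson} piece $\sum_{p\neq 0} w_p e^{-ipR}\bigl[a_p^\dagger k(N_+) + k(N_+)a_{-p}\bigr]$ with $k(N_+):=\sqrt{(N-N_+)/N}-1$.

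The workhorse of the estimate is a pair of elementary scalar bounds, interpreted as functional-calculus inequalities on $\mathcal{F}_+^{\leq N}$: $|h(N_+)| \leq N_+ + 1$ and $|k(N_+)| \leq N_+/N$. The first follows from the sandwich $(N-N_+-1)^2 \leq (N-N_+)(N-N_+-1) \leq (N-N_+)^2$, valid for $N_+ \leq N-1$; the boundary case $N_+ = N$ is checked directly and gives $h(N)=-(N-1)$, still within the bound. The second is the elementary inequality $1-\sqrt{1-x} \leq x$ on $x\in[0,1]$ applied to $x=N_+/N$. Crucially, the factor $1/(N-1)$ (respectively $1/\sqrt{N}$) already present in the pairing and impurity pieces then combines with $h$ (respectively $k$) to yield coefficients of norm $O(N_+/N)$.

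For each of the three contributions I would apply Cauchy--Schwarz in the form $|(\Phi, XY\Phi)| \leq \|X^\dagger\Phi\|\,\|Y\Phi\|$, with $Y$ the functional factor in $N_+$ and $X$ the creation/annihilation piece. For the diagonal term, the commutativity of $N_+$ with $a_p^\dagger a_p$ together with the trivial $\sum_p v_p a_p^\dagger a_p \leq v(0) N_+$ suffices. For the pairing term, the operator inequality $\sum_p v_p a_p^\dagger a_p\, a_{-p}^\dagger a_{-p} \leq v(0) N_+^2$ (obtained by $a_{-p}^\dagger a_{-p} \leq N_+$), combined with Cauchy--Schwarz in $p$ and the bound $\|h(N_+)\Phi\|^2 \leq 2(\Phi,N_+^2\Phi) + 2\|\Phi\|^2$, yields the claimed form after one use of AM--GM. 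For the impurity term, the crude bound $\|\sum_p w_p e^{ipR} a_p \Phi\| \leq \|w\|_2\sqrt{(\Phi,N_+\Phi)}$ together with $\|k(N_+)\Phi\| \leq N^{-1}\sqrt{(\Phi,N_+^2\Phi)}$, after AM--GM and the trivial $N_+ \leq N_+^2 + 1$, produces a bound of order $1/N$ of the required form. Summing the three contributions and absorbing $N$-independent factors into $\alpha',\beta'$ then gives the lemma.

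The main technical point to watch is the non-commutativity of $N_+$ with the $a_p, a_p^\dagger$ for $p\neq 0$: while the functional factors $h(N_+), k(N_+)$ commute with the individual number operators $a_p^\dagger a_p$, they do \emph{not} commute with $a_p^\dagger a_{-p}^\dagger$, $a_p a_{-p}$, or with single $a_p, a_p^\dagger$. This is precisely why Cauchy--Schwarz is applied at the inner-product level, keeping the functional factor as a separate vector rather than attempting to move it past the creation operators. The only other point needing care is the degenerate sector $N_+ \in \{N-1, N\}$, where $\sqrt{(N-N_+)(N-N_+-1)}$ contains a square root of a vanishing or negative product; this is harmless because the operator $a_p^\dagger a_{-p}^\dagger$ in front already accounts for the correct combinatorics, and the uniform scalar bound $|h(N_+)| \leq N_+ + 1$ holds throughout.
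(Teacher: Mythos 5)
Your proposal is correct and follows essentially the same route as the paper: the same term-by-term application of the identities \eqref{wp}--\eqref{vp}, the same three difference operators with functional-calculus factors bounded by $|h(N_+)|\leq N_++1$ and $|k(N_+)|\leq N_+/N$, and the same Cauchy--Schwarz step (yours at the inner-product level, the paper's in the operator form $AB+BA^{\dagger}\leq \epsilon A^{\dagger}A+\epsilon^{-1}B^2$, which is equivalent here). The only cosmetic differences are your weighted bound $\sum_p v_p a_p^{\dagger}a_pa_{-p}^{\dagger}a_{-p}\leq v(0)N_+^2$ in place of the paper's uniform-in-$p$ estimate \eqref{aps}, and your explicit discussion of the sectors $N_+\in\{N-1,N\}$, which the paper leaves implicit.
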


\begin{proof} By using the algebraic properties \eqref{wp}--\eqref{vp} of $U$ we see that the expressions to estimate are the following. First, using \eqref{wp},
\begin{align}\nonumber
&|(\Phi, \left[ N^{-1/2}\sum_{p\neq 0} w_p e^{-ipR} U(a^{\dagger}_pa_0+a_p a^{\dagger}_0)U^{\dagger}-\sum_{p\neq 0} w_pe^{-ipR} (a^{\dagger}_p+a_{-p})\right] \Phi)| \\ &=  \nonumber
\left|\sum_{p\neq 0} (\Phi,w_p \left (a_p^{\dagger}e^{-ipR}\left(1-\sqrt{\frac{N-N_+}{N}}\right)+\left(1-\sqrt{\frac{N-N_+}{N}}\right)a_{-p}e^{ipR}\right) \Phi)\right|  \\ & \leq \epsilon^{-1} \frac{(\Phi,N_+^2\Phi)}{N^2}\sum_p |w_p|^2+ \epsilon (\Phi,N_+\Phi)
\end{align} 
which gives an expression of the type claimed Proposition \ref{prop} for $\epsilon^{-1}=N^2/(N-1)$. In the above, we used the Cauchy--Schwarz inequality \begin{equation}\label{extcs}
 AB+BA^{\dagger}\leq \epsilon A^{\dagger}A+\epsilon^{-1}B^2 
\end{equation}
for $A=a^{\dagger}_p e^{-ipR}$ and $B=w_p(1-\sqrt{(N-N_+)/N})$, and used the bound \begin{equation*}B^2=w_p^2N^{-1}(\sqrt{N}+\sqrt{N-N_+})^{-2}N_+^2\leq w_p^2 N_+^2/N^2. \end{equation*} 
Similarly, from \eqref{vp}, we arrive at the second term to estimate:  
\begin{align}\nonumber
&  \left| \sum_{p\neq 0}  (\Phi, (v_p a^{\dagger}_p a^{\dagger}_{-p}  \left(\frac{\sqrt{(N-N_+)(N-N_+-1)} }{N-1}-1\right)+h.c.)\Phi)\right|   \\    \nonumber
& \leq \epsilon^{-1} \sum_{p\neq 0} |v_p|^2 \frac{(\Phi,(N_++1)^2\Phi)}{(N-1)^2}+ \sum_{p\neq 0} \epsilon (\Phi,a^{\dagger}_p a^{\dagger}_{-p}a_{-p}a_p\Phi)    \\ 
& \leq  C\frac{(\Phi,(N_++1)^2\Phi)}{N-1}+\frac{(\Phi,N_+(N_+-1)\Phi)}{N-1}
\end{align} 
for $\epsilon^{-1}= N-1$. We used \eqref{extcs} for $A=a^{\dagger}_p a^{\dagger}_{-p}$ and 
\begin{equation*} 
B= v_p \left( \frac{\sqrt{(N-N_+)(N-N_+-1)}}{N-1}-1 \right) ,
\end{equation*} 
whose square is bounded by $v_p^2(\frac{N_++1}{N-1})^2$. Additionally, 
\begin{equation}\label{aps}
\sum_{p\neq 0} a^{\dagger}_p a^{\dagger}_{-p}a_{-p}a_p
\leq \sum_{p\neq 0}a^{\dagger}_pN_+ a_p=N_+^2-N_+ .
\end{equation}  Similarly, 
\begin{align}\nonumber
& \left|(\Phi, \left[ (N-1)^{-1}v_p U(a^{\dagger}_p a_pa_0^{\dagger}a_0+h.c.) U^{\dagger}-2a^{\dagger}_p a_p\right] \Phi)\right |
\\ & =\left|(\Phi,(v_p a^{\dagger}_pa_p\left(\frac{N-N_+}{N-1}-1\right)+h.c.)\Phi)\right| 
\leq v_0 \frac{(\Phi,N_+(N_+-1)\Phi)}{N-1}.
\end{align} 
By combining these inequalities, we obtain the desired bound. 
\end{proof}
The main result of this section, Proposition \ref{prop}, is a direct consequence of the last two Lemmas.

\section{Proof of Theorem 1}

For brevity we denote  $H_N-E_{\rm H}(N)$ by $H_N'$.

\subsection{Lower bound}
Let $\xi> 0$ and consider $i$ such that  $e_i(H_N)-E_{\rm H}(N)\leq \xi$. Let  $G$ be the span of the $i+1$ lowest eigenvectors of $H'_N$ (their existence is shown in Theorem~\ref{thm2}; its proof relies on compactness arguments and does not exploit Theorem \ref{thm1}). For any normalized $\Psi \in G$, $(\Psi,  H'_N \Psi)\leq e_i(H'_N)$. For $\Psi \in G$, let 
 $\Phi=U\Psi \in L^2(\mathbb{T}^d)\otimes \mathcal{F}_+^{\leq N}$. With the choice $\epsilon=\sqrt{\xi/N}$ in Proposition~\ref{prop} it  follows, by additionally invoking Lemma~\ref{lemma2}, that $(\Phi,UH_N'U^{\dagger}\Phi)\geq (\Phi,\mathbb{H}^{\text{F}}\Phi) -\frac{C \xi^{3/2}}{\sqrt{N}}$ for some $C>0$. Thus clearly $e_i(H'_N)+C \xi^{3/2} N^{-1/2}\geq \max_{\Psi \in  G } (\Psi,U^{\dagger}\mathbb{H}^{\text{F}} U\Psi) $ and, by the min-max principle, 
\begin{equation}\label{low}
e_i(H'_N)+C \xi^{3/2} N^{-1/2}\geq e_i(\mathbb{H}^{\text{F}}).
\end{equation}

\subsection{Upper bound}

For the upper bound, we use \emph{Fock space localization}. It is quantified by the following result \cite{LNSS, LiebSolovej}.

\begin{proposition}\label{lnss}
Let $A>0$ be an operator on $\mathcal{F}$ with domain $D(A)$ such that for the projections $\bar{P}_j: \mathcal{F}\rightarrow \mathcal{H}_j$ we have $\bar{P}_jD(A)\subset D(A)$ and $\bar{P}_j A \bar{P}_i=0$ for $|i-j|>\sigma$ for some constant $\sigma>0$. Then, if $f,g \in C^{\infty}(\mathbb{R}, \mathbb{R}_{\geq 0})$ with $f^2+g^2\equiv 1$ and $f(x)=1$ for $|x|\leq 1/2$ as well as $f=0$ for $x>1$, then 
 we have the inequality \begin{equation}
-\frac{C \sigma^3}{M^2}\sum_{j=0}^{\infty}\bar{P}_j A \bar{P}_j\leq  A-f_M A f_M -g_M A g_M\leq \frac{C \sigma^3}{M^2}\sum_{j=0}^{\infty}\bar{P}_j A \bar{P}_j 
\end{equation} for all $M\in \mathbb{N}$. Here $f_M$ denotes the operator
\begin{equation}
f_M:= \sum_{j=0}^{\infty}f\left(\frac{j}{M}\right)\bar{P}_j
\end{equation} 
and analogously for $g_M$. 
\end{proposition}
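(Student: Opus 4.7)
The plan is to convert the operator difference $A - f_M A f_M - g_M A g_M$ into a sum of double commutators, and then exploit the band structure of $A$ together with the smoothness of $f$ and $g$ to produce the $\sigma^3/M^2$ gain. The starting point is the elementary algebraic identity $f A f - \tfrac12(f^2 A + A f^2) = -\tfrac12[f,[f,A]]$, valid for any symmetric $f$, which one verifies by expanding the double commutator. Applying it with $f = f_M$ and with $f = g_M$, summing, and invoking $f_M^2 + g_M^2 = 1$ gives
\[
A - f_M A f_M - g_M A g_M \;=\; \tfrac12 [f_M, [f_M, A]] + \tfrac12 [g_M, [g_M, A]],
\]
so the problem reduces to the operator bound $\pm [f_M,[f_M,A]]\leq (C\sigma^3/M^2)\sum_j \bar{P}_j A \bar{P}_j$ (and analogously for $g_M$).

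For the double commutator, I would use that $f_M = \sum_k f(k/M)\bar{P}_k$ commutes with every $\bar{P}_k$. Writing $A = \sum_{k,i} \bar{P}_k A \bar{P}_i$, a direct computation yields
\[
[f_M,[f_M,A]] = \sum_{k,i} \bigl(f(k/M)-f(i/M)\bigr)^2 \bar{P}_k A \bar{P}_i .
\]
By the band hypothesis only terms with $|k-i|\leq \sigma$ contribute, and on this band the smoothness of $f$ gives the pointwise bound $\bigl(f(k/M)-f(i/M)\bigr)^2 \leq \|f'\|_\infty^2 \sigma^2/M^2$. This already supplies a factor $\sigma^2/M^2$; the remaining factor $\sigma$ will come from counting indices within the band.

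The last step is to bound the off-diagonal blocks by diagonal ones using Cauchy--Schwarz for the nonnegative sesquilinear form $(\varphi,\eta)\mapsto \langle \varphi, A\eta\rangle$. Setting $a_j := \langle \bar{P}_j\psi, A \bar{P}_j\psi\rangle\geq 0$, positivity of $A$ gives $|\langle \bar{P}_k\psi, A\bar{P}_i\psi\rangle|\leq \sqrt{a_k a_i}\leq \tfrac12(a_k+a_i)$ for every $\psi\in D(A)$, hence
\[
\bigl|\langle \psi, [f_M,[f_M,A]]\psi\rangle\bigr| \leq \frac{\|f'\|_\infty^2\sigma^2}{M^2}\sum_{|k-i|\leq\sigma}\frac{a_k+a_i}{2} \leq \frac{(2\sigma+1)\|f'\|_\infty^2\sigma^2}{M^2}\sum_j a_j ,
\]
where the last inequality uses that fixing $k$ leaves at most $2\sigma+1$ admissible $i$'s. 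This is precisely the required bound with $C$ absorbing $\|f'\|_\infty^2$; the identical argument applies to $g_M$, and adding the two yields the proposition. The only subtle point, which is not a real obstacle, is domain-related: the assumption $\bar{P}_j D(A)\subset D(A)$ legitimizes the block decomposition and the commutator manipulations, and the final statement is naturally understood as an inequality between quadratic forms on $D(A^{1/2})$.
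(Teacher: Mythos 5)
Your proposal is correct and follows essentially the same route as the proof the paper relies on: the paper does not reprove the proposition but cites the IMS-type argument of Lewin--Nam--Serfaty--Solovej (Appendix B of their paper), which is precisely your combination of the double-commutator identity $A-f_MAf_M-g_MAg_M=\tfrac12[f_M,[f_M,A]]+\tfrac12[g_M,[g_M,A]]$, the band structure giving $(f(k/M)-f(i/M))^2\leq \|f'\|_\infty^2\sigma^2/M^2$, and Cauchy--Schwarz for the positive form to reduce off-diagonal blocks to the diagonal ones. All steps check out.
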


For the proof, which is based on an IMS-type argument, see \cite[Appendix B]{LNSS}. Proposition~\ref{lnss} can be used to quantify the error made  by constraining the states on  Fock space to contain only up to $M$ particles. From the Proposition, we deduce 

\begin{lemma}\label{localization}
We have
\begin{equation}
\mathbb{H}^{\rm{F}}-f_M \mathbb{H}^{\rm{F}} f_M -g_M \mathbb{H}^{
\rm{F}} g_M \geq - \frac{C}{M^2}(\mathbb{H}^{\rm{F}}+C)
\end{equation}
for all $M \in \mathbb{N}$. 
\end{lemma}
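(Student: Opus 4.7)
My plan is to apply Proposition \ref{lnss} directly to a suitably shifted version of $\mathbb{H}^{\rm F}$. Since by Lemma \ref{lemma3} the Fr\"ohlich Hamiltonian is bounded below, I will choose a constant $K>0$ such that $A := \mathbb{H}^{\rm F} + K > 0$ as a quadratic form on $L^2(\mathbb{T}^d)\otimes \mathcal{F}_+$. One then needs to identify the band-width parameter $\sigma$ appearing in the hypothesis of Proposition~\ref{lnss}. Inspecting the decomposition $\mathbb{H}^{\rm F} = \mathbb{H}^{\rm F}_0 + V^{\rm OD} + \tilde W$ used in the proof of Lemma~\ref{lemma3}: the term $\mathbb{H}^{\rm F}_0$ preserves $N_+$; the operator $V^{\rm OD}$ is built from $a^\dagger_p a^\dagger_{-p}$ and its adjoint, so it changes the boson number by exactly $\pm 2$; and $\tilde W$ contains $a^\dagger_p$, $a_{-p}$ (dressed by impurity-side factors $e^{\mp ipR}$), so it changes the boson number by $\pm 1$. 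Hence $\bar P_j A \bar P_i = 0$ whenever $|i-j|>2$, and Proposition \ref{lnss} applies with $\sigma=2$.

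A key observation is that the constant shift drops out of the localization error. Because $f^2+g^2\equiv 1$ and the $\bar P_j$ are mutually orthogonal, $f_M^2 + g_M^2 = \mathbb{I}$, and hence
\begin{equation*}
f_M K f_M + g_M K g_M = K(f_M^2 + g_M^2) = K ,
\end{equation*}
so the $K$ cancels on the left-hand side, giving
\begin{equation*}
A - f_M A f_M - g_M A g_M \;=\; \mathbb{H}^{\rm F} - f_M \mathbb{H}^{\rm F} f_M - g_M \mathbb{H}^{\rm F} g_M .
\end{equation*}

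It remains to bound the right-hand side of Proposition \ref{lnss}. The block-diagonal part satisfies $\sum_j \bar P_j A \bar P_j = \mathbb{H}^{\rm F}_0 + K$, and by the second inequality of Lemma \ref{lemma3} we have $\mathbb{H}^{\rm F}_0 \leq C_1 \mathbb{H}^{\rm F} + C_2$. Absorbing $\sigma^3 = 8$ into the prefactor, this gives exactly the stated bound $-C M^{-2}(\mathbb{H}^{\rm F}+C)$.

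The only step I expect to need some care is verifying that Proposition \ref{lnss}, which is stated for operators on $\mathcal F$ alone, extends to operators on the full space $L^2(\mathbb T^d)\otimes \mathcal F_+$ by tensoring the Fock-space projections $\bar P_j$ with the identity on the impurity factor. This should follow immediately from the IMS-type argument of \cite[Appendix~B]{LNSS}: all manipulations there rely solely on the band-decomposition structure, which is preserved once one checks that the impurity-dependent terms $P^2/2M$ and $e^{-ipR}$ commute with $\mathbb{I}\otimes \bar P_j$, as is manifest.
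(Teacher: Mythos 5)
Your proposal is correct and follows essentially the same route as the paper: shift $\mathbb{H}^{\rm F}$ by a constant to make it positive (the paper takes $A=\mathbb{H}^{\rm F}-e_0(\mathbb{H}^{\rm F})$), apply Proposition \ref{lnss} with $\sigma=2$, observe that the constant shift cancels because $f_M^2+g_M^2=\mathbb{I}$, and bound the block-diagonal part $\sum_j \bar P_j A \bar P_j=\mathbb{H}^{\rm F}_0+\text{const}$ via Lemma \ref{lemma3}. The only difference is your use of a generic shift $K$ instead of $-e_0(\mathbb{H}^{\rm F})$, which is immaterial.
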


\begin{proof}
We apply Proposition \ref{lnss} for $A=\mathbb{H}^{\rm{F}}-e_0(\mathbb{H}^{\rm{F}})$.  From Lemma \ref{lemma3} it follows $e_0(\mathbb{H}^{\rm{F}})\geq -C_2/C_1$ and further that $\sum_j \bar{P}_j (\mathbb{H}^{\rm{F}}-e_0(\mathbb{H}^{\rm{F}})\bar{P}_j=\mathbb{H}_0^{\rm{F}}-e_0(\mathbb{H}^{\rm{F}})  \leq C_1 C_0^{-1} \mathbb{H}^{\rm{F}}+(C_2 C_0^{-1}-e_0(\mathbb{H}^{\rm{F}}))$, which leads to the right hand side of the claimed inequality, with $\sigma=2$. Using $f^2_M+g^2_M=\mathbb{I}$, we have $A-f_MAf_M-g_MAg_M=\mathbb{H}^{\rm{F}}-f_M \mathbb{H}^{\rm{F}} f_M -g_M \mathbb{H}^{\rm{F}} g_M$, which yields the left hand side of the desired result. 
\end{proof}

\begin{lemma}\label{cor}
Let $Y\subset L^2(\mathbb{T}^d)\otimes \mathcal{F}_+$ be the spectral subspace of $\mathbb{H}^{\rm{F}}$ corresponding to an energy window $ [e_0(\mathbb{H}^{\rm F}), e_0(\mathbb{H}^{\rm F})+\xi]$ for $\xi>0$. Then $\dim f_NY:=\dim \lbrace f_N\Psi: \Psi\in Y \rbrace=\dim Y$ for   $N$ large enough and  $\frac{\xi}{N}$ small enough.
\end{lemma}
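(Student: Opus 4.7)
The goal is to show that the smooth Fock-space cutoff $f_N$ does not collapse any direction of the low-energy spectral subspace $Y$ of $\mathbb{H}^{\rm F}$. Since $Y$ is finite dimensional (using that $\mathbb{H}^{\rm F}$ has discrete spectrum, which is part of Theorem~\ref{thm2}(1)), the equality $\dim f_N Y = \dim Y$ is equivalent to $f_N$ acting injectively on $Y$, i.e. $Y \cap \ker f_N = \{0\}$. The plan is therefore to argue by contradiction: assume a normalized $\Psi \in Y$ with $f_N \Psi = 0$, and show that for $N$ large and $\xi/N$ small this forces $\Psi = 0$.

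For the first half of the contradiction, I will exploit the particle-number localization of $f_N$. Because the $\bar P_j$ are mutually orthogonal projections and because $f(j/N)=1$ whenever $j \leq N/2$, the identity $f_N \Psi = \sum_{j} f(j/N)\bar P_j \Psi = 0$ forces $\bar P_j \Psi = 0$ for every $j \leq N/2$. Hence $\Psi$ is supported on the sector $N_+ > N/2$ and in particular
\begin{equation}
(\Psi, N_+ \Psi) \;\geq\; \tfrac{N}{2}\,\|\Psi\|^2 \;=\; \tfrac{N}{2}.
\end{equation}

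For the second half I will use the a~priori bound coming from Lemma~\ref{lemma3}, which gives the operator inequality $N_+ \leq C_0 C_1 \mathbb{H}^{\rm F} + C_0 C_2$ on $L^2(\mathbb{T}^d)\otimes \mathcal{F}_+$. Since $\Psi$ lies in the spectral subspace $Y$ associated to energies at most $e_0(\mathbb{H}^{\rm F})+\xi$, we have $(\Psi, \mathbb{H}^{\rm F} \Psi)\leq e_0(\mathbb{H}^{\rm F})+\xi$, and consequently
\begin{equation}
(\Psi, N_+ \Psi) \;\leq\; C_0 C_1 (e_0(\mathbb{H}^{\rm F})+\xi) + C_0 C_2 \;\leq\; K(1+\xi)
\end{equation}
for some constant $K$ independent of $N$ and $\xi$ (using that $e_0(\mathbb{H}^{\rm F})$ is a fixed finite number, bounded below by Lemma~\ref{lemma3}). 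Combining the two estimates yields $N/2 \leq K(1+\xi)$, i.e.\ $1 \leq 2K(N^{-1} + \xi/N)$, which is violated as soon as $N$ is sufficiently large and $\xi/N$ sufficiently small. This contradicts $\|\Psi\|=1$, so $Y \cap \ker f_N = \{0\}$ and the claim follows.

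There is no real obstacle here: the whole argument is a short application of the number-operator bound from Lemma~\ref{lemma3} together with the elementary fact that $f_N$ annihilates only vectors living on sectors with at least $N/2$ excitations. The only thing to be slightly careful about is the finite dimensionality of $Y$, but this is guaranteed by the discreteness part of Theorem~\ref{thm2}, whose proof is independent of Theorem~\ref{thm1} and of the present Lemma.
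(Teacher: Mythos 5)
Your proof is correct and follows essentially the same route as the paper: argue by contradiction from a normalized $\Psi\in Y$ with $f_N\Psi=0$, note that such a $\Psi$ must live on sectors with at least $N/2$ excitations so that $(\Psi,N_+\Psi)\geq N/2$, and play this against the bound $(\Psi,N_+\Psi)\leq C(e_0(\mathbb{H}^{\rm F})+\xi)+C$ from Lemma~\ref{lemma3}. The paper phrases the first step via $\Phi=g_N\Phi$ and the chain $(\Phi,g_N\mathbb{H}^{\rm F}g_N\Phi)\geq C(\Phi,g_NN_+g_N\Phi)-C\geq CN-C$, but this is the identical estimate.
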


\begin{proof}
Suppose $\dim f_N Y < \dim Y$, in which case there exists $\Phi\in Y$ with $\|\Phi\|=1$ such that $f_N\Phi=0$. In particular, $\Phi = g_N \Phi$. 
From Lemma~\ref{lemma3} we thus conclude that
\begin{equation}
e_0(\mathbb{H}^{\rm{F}}) + \xi \geq (\Phi, \mathbb{H}^{\rm{F}}\Phi) = (\Phi, g_N \mathbb{H}^{\rm{F}} g_N \Phi) \geq C (\Phi,g_NN_+g_N\Phi)-C \geq C N - C , 
\end{equation} 
which is a contradiction for large $N$ and small $\xi/N$.  
\end{proof}

Let us now take $Y\subset L^2(\mathbb{T}^d)\otimes \mathcal{F}_+$ to be the spectral subspace of $\mathbb{H}^{\rm{F}}$ corresponding to energies $E\leq e_i(\mathbb{H}^{\text{F}})$, and let $1\leq \xi \leq N$. The  bound \eqref{low} together with the upper bound of Lemma~\ref{lemma1}  implies that $e_i(\mathbb{H}^{\rm{F}}) \leq C \xi$, and hence also $(\Phi, (\mathbb{H}^{\rm{F}})^k \Phi)\leq  C \xi^k$ for $k=1,2$ for any $\Phi \in Y$.   By Lemma \ref{localization} and Proposition \ref{prop} (with the choice $\epsilon=\sqrt{\xi/N})$ we have
\begin{equation}
\mathbb{H}^{\text{F}}\geq f_N U H'_N U^{\dagger}f_N +e_0(\mathbb{H}^{\rm{F}})g^2_N-\frac{C}{N^2}(\mathbb{H}^{\rm{F}}+K)-C\frac{f_N N_+^2 f_N}{\sqrt{N\xi}}-C\sqrt{\frac{\xi}{N}}f_N N_+ f_N.
\end{equation} 
By taking the expectation value in any normalized $\Phi \in Y$, we obtain, by Lemmas~\ref{lemma3} and~\ref{lemma4} and the simple inequalities $N_+^k\geq f_N N_+^k f_N$ for $k=1,2$, the bound 
\begin{equation}\label{upeq}
C \xi \left(\frac{\xi}{N}\right)^{1/2} + e_i(\mathbb{H}^{\rm{F}}) \geq (\Phi, f_N U H'_N U^{\dagger} f_N \Phi)+e_0(\mathbb{H}^{\rm{F}})(\Phi,g_N^2\Phi) .
\end{equation} 
Since $g^2(x)\leq 2x$, we have $g_N^2\leq \frac{2N_+}{N}\leq \frac{C \mathbb{H}^{\rm{F}}+C}{N}$ by Lemma~\ref{lemma3}. For $Y\in \Phi$ we thus have $(\Phi, g^2_N \Phi)\leq \frac{C\xi+C}{N}$.  
Hence  $1\geq (\Psi,f^2_N \Psi) \geq 1- \frac{C\xi+C}{N} >0$ for large $N$ and $\xi/N$ small enough. 
By Lemma~\ref{cor} and the min-max principle, the maximum over $Y$ of the right hand side \eqref{upeq}  is at least as large as $e_i(H_N')+O(\xi^2N^{-1})$. This 
allows us to conclude that 
\begin{equation}\label{up}
C\xi \left(\frac{\xi}{N}\right)^{1/2} +e_i(\mathbb{H}^{\rm{F}})\geq e_i(H'_N)
\end{equation}  
for some $C>0$, which is the desired bound.

\section{Proof of Theorem 2}

\subsection{Existence of eigenvectors}

We shall now conclude the existence of eigenvectors of $H_N$ and $\mathbb{H}^{\text{F}}$ by showing that these operators have compact resolvents. By the definition of compactness and the spectral theorem one  easily sees that if $A\geq B > 0$, then the compactness of $B^{-1}$ implies the compactness of $A^{-1}$. Since  the particles are confined to the unit torus, for any $\epsilon>0$ the operators $T+\epsilon$ and $P^2+\epsilon$ are strictly positive and have purely discrete spectra with eigenvalues accumulating at infinity; therefore, they have a compact inverse. The same observation applies to the operator \begin{equation}
\mathbb{H}_0:=\frac{P^2}{2M}+\sum_{p\neq 0} e_p b^{\dagger}_p b_p
\end{equation} since $\lim_{|p|\rightarrow\infty} e_p=\infty$ and $\inf_p e_p>0$. Since $H_N\geq T +\frac{P^2}{2M}$, we conclude  that $H_N$ has compact resolvent, which, by the spectral theorem,  implies that the spectrum of $H_N$ is discrete and eigenvectors exist. On the other hand, by completing the square, as in Lemma~\ref{lemma3}, it is easy to see that 
\begin{equation}
\mathbb{H}^{\text{F}}\geq c \mathbb{H}_0 -d
\end{equation} 
for appropriate constants $c, d>0$. The existence of eigenvectors of $\mathbb{H}^{\text{F}}$, along with the fact that its spectrum is discrete, follows now from precisely the same reasoning as above. This proves the first part of Theorem 2.
 
\subsection{Convergence of eigenvectors}

Fix $\xi>0$ and take any $i$ such that $e_i(H'_N)\leq \xi$, uniformly in $N$. Recall that from the proof of the lower bound in Theorem~\ref{thm1}, we have
$\sum_{j=0}^i e_j(\mathbb{H}^{\text{F}})\leq \sum_{j=0}^i (U\Psi_j, \mathbb{H}^{\text{F}} U\Psi_j)\leq \sum_{j=0}^i e_j(H_N')+c_N$ with $\lim_N c_N=0$ for $i$ fixed. The upper bound \eqref{up} implies further that $e_j(H_N') \leq e_j(\mathbb{H}^{\text{F}}) +c'_N$ where again $c'_N$ goes to zero as $N\rightarrow \infty$. Thus, 
\begin{equation}\label{last}
\lim_{N\rightarrow \infty} \sum_{j=0}^i (U\Psi_j, \mathbb{H}^{\text{F}} U\Psi_j)= \sum_{j=0}^i e_j(\mathbb{H}^{\text{F}}).
\end{equation}
We first show the convergence for ground states. Recall that $\mathbb{P}_i$ denotes the orhogonal projection onto the eigenspace of $\mathbb{H}^{\rm{F}}$ corresponding to energy $e_i(\mathbb{H}^{\rm{F}}$). By writing $U\Psi_0=a_N+b_N$, $a_N\in\text{ran}\mathbb{P}_0$  and $b_N \perp a_N$, we have 
\begin{equation}
(U\Psi_0, \mathbb{H}^{\text{F}} U \Psi_0)  \geq \|\Psi_0\|^2 e_0(\mathbb{H}^{\text{F}})+\left(\inf_{\Psi \in \text{ker}{\mathbb{P}}_0} (\Psi,\mathbb{H}^{\text{F}}\Psi)-e_0(\mathbb{H}^{\text{F}})\right)\|b_N\|^2. 
\end{equation}
By using \eqref{last} for $i=0$ as well as the fact that $\inf_{\Psi \in \text{ker}\mathbb{P}_0 }(\Psi,\mathbb{H}^{\text{F}}\Psi)>e_0(\mathbb{H}^{\text{F}})$ by the discreteness of the spectrum of $\mathbb{H}^{\text{F}}$, we have $\lim_{N\rightarrow \infty} \|b_N\|=0$, which is the desired result for the ground states. 

For higher eigenvectors, we apply a reasoning similar to the one in \cite[Sec. 5]{SeYin}. 
Let us take any $k>0$ such that $e_{k+1}(\mathbb{H}^{\text{F}})>e_k(\mathbb{H}^{\text{F}})$. Consider the operator $\tilde{H}:=\mathbb{H}^{\text{F}} \tilde{\mathbb{P}}_k +e_{k}(\mathbb{H}^{\text{F}})(1-\tilde{\mathbb{P}}_k)$ where $\tilde{\mathbb{P}}_k$ denotes the projection onto the $k+1$ lowest eigenvectors of the Fr\"ohlich Hamiltonian $\mathbb{H}^{\rm{F}}$. $\tilde{H}$ acts on $L^2(\mathbb{T}^d)\otimes \mathcal{F}_+$ and has spectrum $\lbrace e_0(\mathbb{H}^{\text{F}}), ..., e_k(\mathbb{H}^{\text{F}})\rbrace$. Therefore, by the min-max principle, 
\begin{equation}
\sum_{i=0}^k (U\Psi_i, \tilde{H} U\Psi_i) \geq \sum_{i=0}^k e_i(\mathbb{H}^{\text{F}}). 
\end{equation} 
Clearly, $\mathbb{H}^{\text{F}}\geq \mathbb{H}^{\text{F}}\tilde{\mathbb{P}}_k +e_{k+1}(\mathbb{H}^{\text{F}})(1-\tilde{\mathbb{P}}_k)$ so that 
\begin{equation}\label{bul}
\sum_{i=0}^k (U\Psi_i, \mathbb{H}^{\text{F}} U\Psi_i) \geq \sum_{i=0}^k e_i(\mathbb{H}^{\text{F}}) +(e_{k+1}(\mathbb{H}^{\text{F}})-e_k(\mathbb{H}^{\text{F}}))\sum_{i=0}^k \|(1-\tilde{\mathbb{P}}_k) U\Psi_i \|^2,
\end{equation}  
which can be rewritten as 
\begin{equation}\label{rewr}
\sum_{i=0}^k (U\Psi_i, \tilde{\mathbb{P}}_k U \Psi_i) \geq k+1-\frac{\sum_{i=0}^k\left(e_i(\mathbb{H}^{\text{F}})-(\Psi_i, U^{\dagger}\mathbb{H}^{\text{F}} U\Psi_i)\right)}{e_{k+1}(\mathbb{H}^{\text{F}})-e_k(\mathbb{H}^{\text{F}})}. 
\end{equation} 
Note that the last term converges to zero as $N\to \infty$ by \eqref{last}. 
Take now $l$ to be the largest integer such that  $e_l(\mathbb{H}^{\text{F}})<e_k(\mathbb{H}^{\text{F}})$. The dimension of the eigenspace corresponding to $e_k(\mathbb{H}^{\text{F}})$  therefore equals $k-l$. We have the simple identity \begin{equation}\label{simp}
\sum_{i=l+1}^k (U\Psi_i, \mathbb{P}_k U \Psi_i) = \sum_{i=0}^k (U\Psi_i, \tilde{\mathbb{P}}_k U\Psi_i) + \sum_{i=0}^l (U\Psi_i,\tilde{\mathbb{P}}_l U\Psi_i) -\sum_{i=0}^k (U\Psi_i,\tilde{\mathbb{P}}_l U\Psi_i)-\sum_{i=0}^l (U\Psi_i, \tilde{\mathbb{P}}_k U\Psi_i)
\end{equation} (note the presence of both {tilded} and {untilded} operators). For the first two terms, we can use \eqref{rewr} for a lower bound. Moreover, since the $\Psi_i$ are orthonormal, we have $\sum_{i=0}^k(U\Psi_i, \tilde{\mathbb{P}}_l U\Psi_i) \leq {\rm Tr}\, \tilde{\mathbb{P}}_l = l+1$. The last term in \eqref{simp} is trivially bounded from below by  $-(l+1)$. We thus conclude that  
\begin{equation}
k-l\geq \sum_{i=l+1}^k (U\Psi_i, \mathbb{P}_k U \Psi_i) \geq k-l -C_N-D_N,
\end{equation} 
where the quantities $C_N>0,D_N>0$ can be read off from \eqref{rewr}  and vanish as $N\rightarrow\infty$, because of \eqref{last}. Therefore, $\sum_{i=l+1}^k (U\Psi_i, \mathbb{P}_k U \Psi_i)\rightarrow k-l$, but as each individual term in the sum is $\leq 1$, we must have $\lim (U\Psi_i \mathbb{P}_k U\Psi_i)=1$ for every eigenstate of $H_N'$ with energy $e_k(H'_N)$. This is precisely the convergence result stated in Theorem 2, whose proof is now complete. 

\bigskip
\noindent \textbf{Acknowledgments.} Financial support through the European Research Council (ERC) under the European Union's Horizon 2020 research and innovation programme grant agreement No 694227 (R.S.) and  the Maria Sk\l odowska-Curie grant agreement No. 665386 (K.M.)  is gratefully acknowledged.

\end{document}